\documentclass{article}
\usepackage{kr}

\usepackage{times}
\usepackage{soul}
\usepackage{url}
\usepackage[hidelinks]{hyperref}
\usepackage[utf8]{inputenc}
\usepackage[small]{caption}
\usepackage{algorithmic}
\usepackage{marvosym}
\usepackage{amsmath,amssymb, amsthm}
\usepackage[normalem]{ulem}
\usepackage{bm}
\usepackage{tikz}
\usepackage{pgfplots}
\usepackage{ltl}
\usepackage{graphicx}
\usepackage{todonotes}

\usepackage{subcaption} %
\usepackage[strings]{underscore}	 %
\usepackage{pifont}
\usepackage{cleveref}
\usepackage{xspace}
\usepackage[ruled,vlined]{algorithm2e}
\usepackage{mathtools}
\usetikzlibrary{arrows,automata,positioning,shapes,backgrounds,patterns,calc, fit}
\usepackage{xcolor}
\usepackage{multirow}
\usepackage{orcidlink}
\usepackage{listings}
\usepackage{comment}
\usepackage{booktabs}
\usepackage{wrapfig}
\usepackage{tree-dvips}
\usepackage{qtree}
\usepackage[export]{adjustbox}
\usepackage{thm-restate}

\newtheorem{theorem}{Theorem}
\newtheorem{definition}{Definition}
\newtheorem{proposition}{Proposition}
\newtheorem{corollary}{Corollary}

\allowdisplaybreaks[3]

\newcommand{\Buchi}{B\"{u}chi\xspace}
\newcommand{\bool}{\ensuremath{\mathbb{B}}}
\newcommand{\aut}{\ensuremath{\mathcal{A}}} %
\newcommand{\states}{\ensuremath{Q}}

\newcommand{\trace}{\ensuremath{t}}

\newcommand{\final}{\ensuremath{F}}
\newcommand{\pbf}[1]{\bool^{+}(#1)}

\newcommand\true{\mathit{true}}
\newcommand\false{\mathit{false}}
\newcommand{\from}{\ensuremath{\colon}}

\newcommand{\nats}{\ensuremath{\mathbb{N}}}
\newcommand{\lang}{\ensuremath{\mathcal{L}}}
\newcommand{\booleanformula}{\ensuremath{B}}

\newcommand{\tool}{\textsc{Compilaa}}
\newcommand{\autohyperq}{\textsc{AutoHyperQ}}
\newcommand{\goal}{\textsc{GOAL}}

\newcommand{\twopartelse}[3]
{
	\left\{
		\begin{array}{ll}
			#1 & \mbox{if } #2 \\
			#3 & \mbox{else} 
		\end{array}
	\right.
}

\newcommand{\sexists}{\exists^\circ} %
\newcommand{\sforall}{\forall^\circ}
\newcommand{\minsat}{\mathit{mSat}}

\definecolor{dkcyan}{rgb}{0.1, 0.3, 0.3}
\definecolor{dkgreen}{rgb}{0,0.3,0}
\definecolor{olive}{rgb}{0.5, 0.5, 0.0}
\definecolor{dkblue}{rgb}{0,0.1,0.5}

\definecolor{col:ln}{rgb}  {0.1, 0.1, 0.7}
\definecolor{col:str}{rgb} {0.8, 0.0, 0.0}
\definecolor{col:db}{rgb}  {0.9, 0.5, 0.0}
\definecolor{col:ours}{rgb}{0.0, 0.7, 0.0}

\definecolor{lightgreen}{RGB}{170, 255, 220}
\definecolor{darkbrown}{RGB}{121,37,0}

\colorlet{listing-comment}{gray}
\colorlet{operator-color}{darkbrown}
\colorlet{comment-color}{black!50}

\lstdefinelanguage{custom-lang}{
	keywords={let, in, match, with, when, if, then, else, elif, for, to, do, rec, return, new, not, and, while, any, goto, or},
	keywordstyle=[1]\color{dkblue}\bfseries,
	morekeywords=[2]{append, Set, Dict, Queue, pop, push, add, contains, Map, filter},
	keywordstyle=[2]\sffamily,
	morekeywords=[3]{conflictFixpoint,existsConflicts, universalConflicts, buildAutomaton, existentialKC, isEmpty, LTLtoASA, universalKC, existentialConfs, universalConfs},
	keywordstyle=[3]\color{dkcyan}\ttfamily,
	comment=[l][\color{comment-color}]{//},
	literate=%
	{=}{{{\color{operator-color}=}}}1
	{<-}{{{\color{operator-color}$\leftarrow$}}}1
	{|}{{{\color{dkblue}$\mid$}}}1
	{:}{{{\color{dkblue}:}}}1
	{:=}{{{\color{dkblue}:=}}}1
	{@}{ }1
}

\lstdefinestyle{default}{
	escapeinside={(*}{*)},
	basicstyle=,
	columns=fullflexible,
	commentstyle=\sffamily\color{black!50!white},
	framexleftmargin=1em,
	framexrightmargin=1ex,
	keepspaces=true,
	keywordstyle=\color{dkblue},
	mathescape,
	numbers=left,
	numberblanklines=false,
	numbersep=0.5em,
	numberstyle=\relscale{0.75}\color{gray}\ttfamily,
	showstringspaces=true,
	stepnumber=1,
	xleftmargin=1.2em,
}

\lstnewenvironment{mycode}[1][]
{\small
	\lstset{
		style=default, 
		language=custom-lang,
		#1
	}
}
{}

\def\x{0.5}
\def\y{1}
\def\ylabel{.85}

\newcommand{\awa}{\mathcal{A}}

\newcommand{\ekc}{\mathcal{A}^\exists}
\newcommand{\akc}{\mathcal{A}^\forall}

\title{Knowledge Compilation for Quantification in Alternating Automata}

\author{%
S. Akshay$^1$\and
Alfredo Cantarella$^2$\and
Supratik Chakraborty$^1$\and \\
Bernd Finkbeiner$^{2,3}$ \and
Niklas Metzger$^2$\\
\affiliations
$^1$Indian Institute of Technology Bombay, Mumbai, India\\
$^2$CISPA Helmholtz Center for Information Security, Saarbr\"ucken, Germany\\
$^3$Technical University of Munich, Munich, Germany\\
\emails
\{akshayss, supratik\}@cse.iitb.ac.in,
\{first.last\}@cispa.de,
finkbeiner@cispa.de
}

\begin{document}

\maketitle

\begin{abstract}
   We present a knowledge compilation approach for existential and universal quantification in alternating automata.
    Knowledge compilation transforms models into normal forms with special properties that enable efficient answering of questions of interest.  For Boolean formulas, several normal forms that have proven effective for existential/universal quantification, and even for functional synthesis, have been studied in the literature.
    For infinite word automata, quantification is a fundamental operation in verification tasks such as QPTL satisfiability checking and HyperLTL model checking.
    Existing algorithms rely on nondeterministic infinite word automata, where existential projection can be efficiently performed state-wise, but universal projection requires complementation. Complementing nondeterministic infinite word automata, however, is expensive, making existing algorithms infeasible for automata in practice.
    Towards addressing this problem, we propose novel knowledge compilation techniques for existential and universal quantification on alternating safety automata.
    Our approach compiles alternating automata into normal forms where projection can be applied uniformly and efficiently to each state's transition function.
    Using the compilations for each type of quantification, we can effectively eliminate a sequence of alternating quantifiers in formulas without complementation.
    Our BDD-based prototype demonstrates the practical effectiveness of our algorithms on a suite of QPTL satisfiability benchmarks.
\end{abstract}

\section{Introduction}

Linear Temporal Logic (LTL) has served as a central specification formalism for reasoning about programs since its introduction half a century ago~\cite{DBLP:conf/focs/Pnueli77}. Dominant techniques for reasoning about LTL typically proceed by translating LTL formulas into automata over infinite words -- most notably nondeterministic B\"{u}chi automata~\cite{DBLP:VardiW86,DBLP:Vardi95,DBLP:conf/icalp/SistlaVW85} and alternating B\"{u}chi automata~\cite{DBLP:journals/tcs/MiyanoH84,DBLP:conf/cade/Vardi97} -- and then applying automata-theoretic reasoning. 
Many of these approaches have been realized in state-of-the-art tools~\cite{spot,DBLP:conf/cav/Duret-LutzRCRAS22,meyer2018strix,meyer2021modernising,RSCDLP_dissectingltlsynt22} that routinely scale to large benchmarks, as evidenced in competitions like SyntComp~\cite{DBLP:journals/corr/abs-2206-00251}. However, for other applications such as hyperproperty verification, LTL is not expressive enough, and it becomes necessary to add quantifiers over propositions or even traces. This need was recognized early with the definition of Quantified Propositional Temporal Logic (QPTL) in~\cite{Sistla-QPTL,DBLP:conf/icalp/SistlaVW85}, which extends LTL with existential and universal quantifiers over propositions.
While complexity-theoretic studies  of QPTL were done in early work~\cite{Sistla-QPTL,DBLP:conf/icalp/SistlaVW85}, the development of practically efficient algorithms for QPTL satisfiabilility and model checking has not received much attention over the years.

The design of practical algorithms for reasoning about expressive logics like QPTL is often a balancing act between representational succinctness of the problem instance, and algorithmic tractability. Consider the problem of existentially quantifying an atomic proposition $p$ from an LTL formula $\varphi$.  %
If we convert $\varphi$ to a Nondeterministic B\"{u}chi Automaton (NBA), existentially quantifying $p$ is easy: simply project the $p$-labels from all state transitions. However, this comes at the cost of representation size, since translating LTL formulas to NBA often incurs an exponential blowup in state space.  For universally quantifying $p$, we must also complement the NBA for $\neg \varphi$ after projecting $p$-labels from its transitions. Unfortunately, complementing NBA is a computationally expensive operation~\cite{DBLP:conf/icalp/SistlaVW85,TFVT2011}, even for state-of-the-art tools~\cite{DBLP:conf/cav/Duret-LutzRCRAS22}. Alternating Büchi Automata (ABA) offer a compelling alternative that addresses the representation challenge. ABA generalize nondeterminism by adding universality, allowing the automaton to switch modes: While reading a single input word, it can nondeterministically choose a successor, or it can force the run into multiple parallel paths that must all accept the remainder of the input. ABA are exponentially more succinct than NBA, and support linear-time complementation~\cite{DBLP:journals/tcs/MiyanoH84}. However, existential quantification is much harder in ABA, since the run is a tree and distinct states reading the same input %
may impose contradictory requirements on a proposition. 
For example, for the LTL formula $\varphi$ represented by the ABA in Figure~\ref{fig:running:example:exists}, we cannot represent $\exists a.\varphi$ by simply removing $a$-leaves from the ABA.

Algorithms for QPTL satisfiability and model checking that start from ABA representations of LTL formulas typically proceed by converting the ABA into an NBA, thereby reintroducing the exponential state explosion and sacrificing the very succinctness that makes ABA attractive in the first place. More importantly, handling QPTL formulas with quantifier alternations forces one to complement the automaton at each alternation depth, leading to a nonelementary blow-up and rendering such approaches infeasible on complex formulas. This motivates us to ask whether one can represent ABA in a form that supports quantification directly and efficiently, while remaining within the alternating setting, and avoiding the intermediate nonelementary explosion. While the final emptiness check for ABA remains computationally demanding, eliminating repeated expensive complementation shifts the complexity away from this limiting step, making previously intractable formulas accessible.

\begin{figure}[t]
\centering
    \begin{subfigure}[b]{.53\linewidth}
    \resizebox{.85\textwidth}{!}{%
        \tikzstyle{state}=[draw, circle, fill=none, minimum width=0.7cm, 
minimum height = 0.7cm,
align=center, thick]
\tikzstyle{pseudostate}=[draw, circle, fill, minimum size=1.5pt, inner sep=0pt, outer sep=0pt, fill]

\begin{tikzpicture}[>=stealth',auto, minimum size=0pt, inner sep=0pt, outer sep=0pt]

\node[state, accepting] (p0){$q_0$};
\node (p0label) [above right = 0.3*\ylabel and 0.1*\x of p0]{$\LTLglobally (a \wedge \LTLnext (\neg a \vee b))$};

\node[state, draw=none] (p1) [below left = \y and \x of p0]{$a$};

\node[state, accepting, below right = \y and \x of p0] (p2) {$q_1$};
\node (p2label) [above right = 0.3*\ylabel and 0.1*\x of p2]{$\neg a \vee b$};

\node[state, draw = none] (p3) [below left = \y and \x of p2]{$\neg a$};

\node[state, draw=none] (p4) [below right = \y and \x of p2]{$b$};

\node[state, draw=none] (p5) [below left = \y and \x of p2]{$\neg a$};

\path 
(p0) edge[->,thick] coordinate[pos=0.25](p0l) node[above ] {%
      } (p1)
(p0) edge[->,thick] coordinate[pos=0.25](p0r) node[above ] {%
      } (p2)
(p0) edge[->,thick, loop left] coordinate[pos=0.25](p0loop) node[above ] {%
      } (p0)

(p2) edge[->,thick] node[above ] {%
      } (p3)
(p2) edge[->,thick]  node[above ] {%
      } (p4)

(p0) edge[dotted] node[above ] {%
      } (p0label)
(p2) edge[dotted] node[above ] {%
      } (p2label)

(p0l) edge[-,bend right, thick, inner sep = 0pt] (p0r)
(p0loop) edge[-,bend right, thick, inner sep = 0pt] (p0l)

;
\end{tikzpicture}
    }
        \caption{}
        \label{fig:running:example:exists}
    \end{subfigure}
    \hfill
    \begin{subfigure}[b]{.46\linewidth}
        \resizebox{0.82\textwidth}{!}{%
        \tikzstyle{state}=[draw, circle, fill=none, minimum width=0.7cm, 
minimum height = 0.7cm,
align=center, thick]
\tikzstyle{pseudostate}=[draw, circle, fill, minimum size=1.5pt, inner sep=0pt, outer sep=0pt, fill, show background rectangle]

\begin{tikzpicture}[>=stealth',auto, minimum size=0pt, inner sep=0pt, outer sep=0pt]

\node[state, accepting] (p0){$q_0$};

\node[state, draw=none] (p1) [below left = \y and \x of p0]{$a$};

\node[state, accepting, minimum width = 0.95cm, below right = \y and \x of p0] (p2) {$q_0, q_1$};

\node[state, draw = none] (p3) [below left = \y and \x of p2]{$a$};

\node[state, draw=none] (p4) [below right = 1.4\y and 1*\x of p2]{$b$};

\node[state, draw=none] (p5) [below right = 1.4*\y and -1.8*\x of p2]{$\neg a$};

\node[circle] (p6) [below right = 0.6*\y and 0.1*\x of p2]{};

\path 
(p0) edge[->,thick] coordinate[pos=0.28](p0l) node[above ] {%
      } (p1)
(p0) edge[->,thick] coordinate[pos=0.3](p0r) node[above ] {%
      } (p2)

(p2) edge[->,thick] coordinate[pos=0.25](p0l1) node[above ] {%
      } (p3)
(p2) edge[->,thick, out=198, in=162, min distance=0.8cm] coordinate[pos=0.15](p0loop1) node[above ] {%
      } (p2)
(p2) edge[-,thick] coordinate[pos=0.5](p2useless) node[above ] {%
      } (p6)

(p6) edge[->,thick] coordinate[pos=0.25](p0r1) node[above ] {%
      } (p5)
(p6) edge[->,thick] coordinate[pos=0.25](p0r1) node[above ] {%
      } (p4)

(p0l1) edge[-,bend right=15, thick, inner sep = 0pt] (p2useless)
(p0loop1) edge[-,bend right=15, thick, inner sep = 0pt] (p0l1)

(p0l) edge[-,bend right=20, thick, inner sep = 0pt] (p0r)

;
\end{tikzpicture}
    }
        \caption{}
        \label{fig:running:example:exists:compiled}
    \end{subfigure}
    \caption{(a) An alternating %
    automaton for the LTL formula $\varphi = \protect\LTLglobally (a \wedge \protect\LTLnext (\neg a \vee \protect b))$. (b) The compiled automaton for the existential quantification of the QPTL formula $\exists a. \varphi$. States $q_0$ and $q_1$ are in conflict and merged whenever they are reached conjunctively.}
\end{figure}
\medskip 

In this paper, we propose a solution for the above challenge via knowledge compilation. Our work is inspired by the use of normal forms for existential quantification in the setting of Boolean models. Such forms include circuits in decomposable negation normal form (DNNF)~\cite{darwiche-jacm,ddnnf}, weak-DNNF~\cite{DBLP:conf/cav/AkshayCGKS18,DBLP:conf/aaai/IllnerK24}, synthesis negation normal form (SynNNF) and variants~\cite{DBLP:conf/lics/ShahBAC21,DBLP:journals/amai/AkshayCS24}, for which practically efficient knowledge compilation algorithms have been studied. Knowledge compilation for interval automata, which are generalizations of Binary Decision Diagrams~\cite{Bryant86}, has been studied in~\cite{DBLP:interval-automata-2010}; however, interval automata are unrelated to automata over infinite words. 

Our main novelty lies in extending the knowledge compilation framework for Boolean models to the domain of infinite word automata. Rather than performing a full ABA to NBA translation, we propose a technique to detect and repair only the specific states that cause problems or {\em conflicts} during quantification. Our approach compiles an alternating automaton into a conflict-free representation where existential projection can be applied \emph{state-wise} to the transition function, preserving the compact alternating structure. Consider again the example in Figure~\ref{fig:running:example:exists}, where one possible ABA for the LTL formula $\varphi = \protect\LTLglobally (a \wedge \protect\LTLnext (\neg a \vee \protect b))$ is shown. The compiled automaton for the QPTL formula $\exists a. \varphi$ is shown in~Figure~\ref{fig:running:example:exists:compiled}. Significantly, the compiled automaton is an ABA language-equivalent to the given ABA, and allows existentially quantifying $a$ by simply quantifying $a$ from each state's transition function.
Similarly, in Figure~\ref{fig:running:example:universal} and~\ref{fig:running:example:universal:compiled}, we show an ABA and its language-equivalent compiled form for universally quantifying $a$.  The ABA corresponding to the quantified formula is obtained by universally quantifying $a$ from each state's transition function in Figure~\ref{fig:running:example:universal:compiled}.  

We summarize our primary technical contributions below.
\begin{itemize}
\item We define knowledge representations for existential and universal quantification for ABA, and develop syntactic conditions for these forms.
\item For alternating safety automata (ASA), we identify the notion of \emph{conflicts} for existential quantification, and provide a fixpoint-based compilation algorithm that identifies the set of conflicting states. We also present a repair algorithm based on a \emph{modified Miyano-Hayashi construction}~\cite{DBLP:journals/tcs/MiyanoH84} that resolves these conflicts locally.
\item We extend our conflict detection and repair mechanisms to support \emph{universal quantification}, enabling practically efficient satisfiability checking for safety QPTL formulas with quantifier alternations.
    \item We implement our approach in a BDD-based prototype and evaluate it on \emph{QPTL satisfiability} benchmarks against two state-of-the-art QPTL solvers. We demonstrate significant performance improvements by avoiding complementation operations via knowledge compilation. 
    
\end{itemize}

\begin{figure}[t]
\centering
    \begin{subfigure}[b]{.53\linewidth}
    \resizebox{.85\textwidth}{!}{
        \tikzstyle{state}=[draw, circle, fill=none, minimum width=0.7cm, 
minimum height = 0.7cm,
align=center, thick]
\tikzstyle{pseudostate}=[draw, circle, fill, minimum size=1.5pt, inner sep=0pt, outer sep=0pt, fill]

\begin{tikzpicture}[>=stealth',auto, minimum size=0pt, inner sep=0pt, outer sep=0pt]

\node[state, accepting] (p0){$q_0$};
\node (p0label) [above right = 0.3*\ylabel and 0.1*\x of p0]{$\LTLglobally (b \wedge (\LTLnext a \vee \LTLnext \neg a))$};

\node[state, draw=none] (p1) [below left = \y and \x of p0]{$b$};

\node[state, accepting, below right = 1.2*\y and 1.5*\x of p0] (p3) {$q_2$};

\node[state, accepting, below right = 1.2*\y and -1.3* \x of p0] (p2) {$q_1$};

\node[state, draw = none] (p4) [below = 0.9*\y of p3]{$\neg a$};

\node[state, draw=none] (p6) [below = 0.9\y of p2]{$a$};

\node[circle] (p7) [below right = 0.5*\y and 0.6*\x of p0]{};

\path 
(p0) edge[->,thick] coordinate[pos=0.25](p0l) node[above ] {%
      } (p1)
(p0) edge[-,thick] coordinate[pos=0.4](p0r) node[above ] {%
      } (p7)
(p0) edge[->,thick, loop left] coordinate[pos=0.25](p0loop) node[above ] {%
      } (p6)

(p7) edge[->,thick] node[above ] {%
      } (p2)
(p7) edge[->,thick]  node[above ] {%
      } (p3)

(p2) edge[->,thick]node[above ] {%
      } (p6)
(p3) edge[->,thick]node[above ] {%
      } (p4)

(p0) edge[dotted] node[above ] {%
      } (p0label)

(p0l) edge[-,bend right, thick, inner sep = 0pt] (p0r)
(p0loop) edge[-,bend right, thick, inner sep = 0pt] (p0l)

;
\end{tikzpicture}
    }
        \caption{}
        \label{fig:running:example:universal}
    \end{subfigure}
    \hfill
    \begin{subfigure}[b]{.46\linewidth}
        \resizebox{0.82\textwidth}{!}{
        \tikzstyle{state}=[draw, circle, fill=none, minimum width=0.7cm, 
minimum height = 0.7cm,
align=center, thick]
\tikzstyle{pseudostate}=[draw, circle, fill, minimum size=1.5pt, inner sep=0pt, outer sep=0pt, fill]

\begin{tikzpicture}[>=stealth',auto, minimum size=0pt, inner sep=0pt, outer sep=0pt]

\node[state, accepting] (p0){$q_0$};

\node[state, draw=none] (p1) [below left = \y and \x of p0]{$b$};

\node[state, accepting, minimum width = 0.95cm, below right = \y and \x of p0] (p2) {$q_1, q_2$};

\node[state, draw = none] (p4) [below left = \y and \x of p2]{$\neg a$};

\node[state, draw=none] (p6) [below right = \y and \x of p2]{$a$};

\path 
(p0) edge[->,thick] coordinate[pos=0.25](p0l) node[above ] {%
      } (p1)
(p0) edge[-,thick] coordinate[pos=0.25](p0r) node[above ] {%
      } (p2)
(p0) edge[->,thick, loop left] coordinate[pos=0.25](p0loop) node[above ] {%
      } (p6)

(p2) edge[->,thick]node[above ] {%
      } (p6)
(p2) edge[->,thick]node[above ] {%
      } (p4)

(p0l) edge[-,bend right, thick, inner sep = 0pt] (p0r)
(p0loop) edge[-,bend right, thick, inner sep = 0pt] (p0l)

;
\end{tikzpicture}
    }
        \caption{}
        \label{fig:running:example:universal:compiled}
    \end{subfigure}
    \caption{(a) An alternating 
    automaton for the LTL formula $\varphi = \protect\LTLglobally (b \wedge (\protect \LTLnext a \vee \protect \LTLnext \neg a))$. (b) The compiled automaton for the universal quantification of the QPTL formula $\forall a. \varphi$ . States $q_1$ and $q_2$ are in conflict.}
\end{figure}

\section{Preliminaries}\label{sec:preliminaries}
We consider a finite set of Boolean variables $\Sigma{}$. 
An infinite trace $\trace \subseteq (2^{\Sigma})^\omega$ is an infinite sequence of subsets of Boolean variables.
We refer to the $i$'th element of the trace by $\trace[i]$.
An (acyclic) tree $\mathcal{T}$ over a set of directions $D$ is a prefix-closed subset of $D^*$.
The empty sequence $\epsilon$ is the root, and the children of a node $n \in \mathcal{T}$ are the nodes $\mathit{children}(n) = \{n \cdot d \in \mathcal{T} \mid d \in D\}$. 
A $\Sigma$-labeled tree is a pair $(\mathcal{T}, r)$, where $r : \mathcal{T} \rightarrow \Sigma$ is the labeling function.
A directed acyclic graph (DAG) $(V,E)$ consists of a set of vertices $V$ and a set of edges $E \subseteq V \times V$.
We define the set of successors of a vertex $v$ in a DAG $(V, E)$ as $\mathit{post}(v) = \{v' \in V \mid (v, v') \in E \}$.
The set of reachable vertices for a vertex $v$ then is $\mathit{reach}(v) = \{v' \in V \mid \exists v_0v_1\ldots v_n \text{ s.t. } (v_i, v_{i+1}) \in E, v_0 = v, \text{ and } v_n = v'\}$.
We assume DAGs to be acyclic.

\textbf{Boolean formulas.}
A Boolean formula over a set of Boolean variables $\Sigma$ is a combination of the standard operators $\wedge$ (and), $\vee$ (or), $\neg$ (negation), and the derived operators $\rightarrow$ (implication), and $\leftrightarrow$ (equivalence).
The semantics of a Boolean formula $\booleanformula(\Sigma)$ over a set of Boolean variables $\Sigma$ is a mapping $\booleanformula : 2^{\Sigma} \rightarrow \{0,1\}$.
A subset $X \subseteq 2^{\Sigma}$ is a satisfying assignment of $\booleanformula$ if $\booleanformula(X)$ evaluates to $1$.
We denote the set of all satisfying assignments of $B$ as $\mathit{Sat(B)}$.
The set of minimal satisfying assignments is $\minsat(B) = \{X \in Sat(B) \mid \forall X' \subset X. X' \notin \mathit{Sat}(B)\}$.
We furthermore use $X \vDash B$ if $X \in \minsat(B)$ and $X \nvDash B$ if $X \notin \minsat(B)$.
Setting a variable $a \in \Sigma$ in a Boolean formula $B$ to either $\true$ or $\false$ is denoted by $\booleanformula[a \mapsto 1]$ and $\booleanformula[a \mapsto 0]$.
Setting a subset of variables $A \subset \Sigma$ in $B$ is denoted by $B[A \rightarrow 1]$ and $B[A\rightarrow 0]$.
Existential quantification of a variable $a \in \Sigma$ in a Boolean formula is the Boolean formula $\exists a. \booleanformula = \booleanformula[a \mapsto 1] \vee \booleanformula[a \mapsto 0]$.
Universal quantification of a variable $a \in \Sigma$ in a Boolean formula is the Boolean formula $\forall a. \booleanformula = \booleanformula[a \mapsto 1] \wedge \booleanformula[a \mapsto 0]$.
We denote the set of \emph{positive} Boolean formulas over a set of variables $\Sigma$ as $\mathbb{B}^+(\Sigma)$, whereby a Boolean formula is positive if $\neg$ does not occur in its representation.

\textbf{Alternating Automata.}
The set of literals $L_{\Sigma} = \{\neg a ~|~ a \in \Sigma\} \cup \Sigma$ for a set of Boolean variables contains all positive and negative literals of $\Sigma$.
We use elements of $L_{\Sigma}$ as shorthand for truth assignments to variables in Boolean formulas, and use all notations accordingly. 
An \emph{alternating \Buchi automaton} $ \aut $ over Boolean variables $\Sigma$ is a tuple $\aut = (\Sigma,\states,q_0,\delta,\final)$, where: $\states$ is a finite set of states, $q_{0} \in \states$ is the initial state, $\delta \from \states \to \pbf{\states{}\cup\ L_{\Sigma}} $ is the \emph{transition function}, and $\final \subseteq \states$ is the \Buchi accepting condition.
We define the set of direct successors for a state $q$ as $\mathit{post}(q) = \{X \cap Q \mid X \in \minsat(\delta(q))\}$.
Similarly, we define the set of literals appearing in the transition function of a state $q$ as $l(q) = \{a \in L_{\Sigma} \mid \exists X \text{ s.t. } X \vDash \delta(q) \text{ and } a \in X\}$.
The set of all reachable states for a state $q$ is defined as $\mathit{reach}(q) = \{q' \in \states \mid \exists q_0q_1\ldots q_n \text{ s.t. } q_0 = q, q_n = q' \text{ and } q_{i+1} \in \bigcup post(q_i)\}$.
A \emph{run} of an alternating automaton on a word $\alpha \in (2^\Sigma)^\omega$ is a ${Q \cup L_{\Sigma}}$-labeled tree $(\mathcal{T}, r)$, where $r(\epsilon) = q_0$ and for all $n \in \mathcal{T}$, if $r(n) = X$, then $\{r(n') \mid n' \in \mathit{children(n)}\}$ satisfies $\delta(q)$, where $q = X|_Q$, and $\alpha[\vert n \vert] = X|_{L_{\Sigma}}$.
A run is \emph{accepting} if for all infinite branches $n_0n_1\ldots$ of the run tree, the sequence $r(n_0)r(n_1)\ldots$ visits states in $F$ infinitely often.
A B\"uchi automaton is \emph{nondeterministic}, if for every state $q \in Q$, the transition function $\delta(q)$, after projection to $Q$, is a disjunction of states.
A B\"uchi automaton is \emph{universal}, if for every state $q \in Q$, the transition function $\delta(q)$, after projection to $Q$, is a conjunction of states.
An alternating automaton is a \emph{safety} automaton if $\final = Q$.
For a trace $\trace \in (2^\Sigma)^\omega$, the \emph{projection} of $\trace$ onto $\Sigma' \subseteq \Sigma$ is defined as $\trace|_{\Sigma'} = (\trace[0] \cap \Sigma')(\trace[1] \cap \Sigma') \cdots$.
The \emph{existential quantification} of an alternating automaton $\aut$ over $\Sigma$ for variable $p \in \Sigma$ is the automaton $\exists p. \aut$ over alphabet $\Sigma \setminus \{p\}$ such that $\mathcal{L}(\exists p. \aut) = \{\trace|_{\Sigma \setminus \{p\}} \mid \exists \trace' \in (2^\Sigma)^\omega. \trace'|_{\Sigma \setminus \{p\}} = \trace \text{ and } \trace' \in \mathcal{L}(\aut)\}$.
The \emph{universal quantification} of $\aut$ for a variable $p \in \Sigma$ is the automaton $\forall p. \aut$ over alphabet $\Sigma \setminus \{p\}$ such that $\mathcal{L}(\forall p. \aut) = \{\trace|_{\Sigma \setminus \{p\}} \mid \forall \trace' \in (2^\Sigma)^\omega. \trace'|_{\Sigma \setminus \{p\}} = \trace \text{ implies } \trace' \in \mathcal{L}(\aut)\}$.

\textbf{QPTL.} Quantified Propositional Temporal Logic (QPTL) \cite{Sistla-QPTL} extends LTL \cite{DBLP:conf/focs/Pnueli77} with quantification over propositional variables. QPTL formulas over a set $\Sigma{}$ of Boolean variables are generated by the following grammar, whereby $a \in \Sigma$:
$ \varphi ::= a \mid \neg \varphi \mid \varphi \land \varphi \mid \LTLnext \varphi \mid \varphi \LTLuntil \varphi \mid \exists p. \varphi $.
The formula $a$ states that the Boolean variable $a$ needs to hold at the current step of a trace, $\LTLnext \varphi$ requires $\varphi$ to hold in the \emph{next} step and $\varphi_1 \LTLuntil \varphi_2$ requires  $\varphi_1$ to hold \emph{until} $\varphi_2$ eventually holds. We derive the usual Boolean constants and connectives $\true, \false, \lor, \rightarrow, \leftrightarrow$, as well as the temporal operators \emph{eventually} ($\LTLfinally \varphi := \true \LTLuntil \varphi$) and \emph{globally} ($\LTLglobally \varphi := \neg \LTLfinally \neg \varphi$).
In addition, we use the \emph{universal quantification} ($\forall p. \varphi := \neg \exists p. \neg \varphi $).
The LTL part of the formula follows the standard LTL semantics~\cite{DBLP:conf/focs/Pnueli77}.
The existential operator states that there must exist an infinite sequence of assignments of Boolean values to p s.t.\ the remaining formula holds. We refer to~\cite{Sistla-QPTL} for details.
Translating the quantifier-free LTL sub-formula of a QPTL formula written in prenex normal form to an alternating automaton can be done in linear time \cite{DBLP:conf/cade/Vardi97}.

\section{Normal Forms for Existential and Universal Quantification}\label{sec:knowledge:compilations:for:existential:and:universal:quantification}
The goal of this paper is to construct alternating automata on which existential and universal quantification can be performed state-wise.
We begin by defining the quantification of variables on \emph{alternating} automata in a state-wise manner, analogous to how existential quantification is performed on nondeterministic automata via projection.
In nondeterministic automata, existential quantification of a variable $p$ can be applied locally to each state's transition function by simply projecting $p$ from the alphabet and existentially quantifying it in the transition formula.
This local operation is sound because nondeterministic runs follow a single path through the automaton.
We define state-wise existential quantification for alternating automata formally:

\begin{definition}[State-wise Existential Quantification]\label{def:state:wise:existential:projection}
Let $\awa = (\Sigma, \states, q_0, \delta, F)$ and let $p \in \Sigma$. 
The \emph{state-wise existential quantification} $\sexists p. \awa$ is the automaton $\awa' = (\Sigma', \states, q_0, \delta', F)$, where 
$\Sigma' = \Sigma \setminus \{p\}$ and $\delta'(q) = \exists p. \delta(q)$.
\end{definition}
State-wise existential quantification applies existential quantification locally to each state's transition function.
This operation removes the variable $p$ from the alphabet and quantifies it in each transition formula independently.
The universal case is defined analogously, where we apply universal quantification to each state's transition function:

\begin{definition}[State-wise Universal Quantification]\label{def:state:wise:universal:projection}
Let $\awa = (\Sigma, \states, q_0, \delta, F)$ and let $p \in \Sigma$. 
The \emph{state-wise universal quantification} $\sforall p. \awa$ is the automaton $\awa' = (\Sigma', \states, q_0, \delta', F)$, where 
$\Sigma' = \Sigma \setminus \{p\}$ and $\delta'(q) = \forall p. \delta(q)$.
\end{definition}

Applying state-wise quantification directly to an arbitrary alternating automaton can dramatically alter its language.
To illustrate this, consider our running example from \Cref{fig:running:example:exists}.
The automaton represents the LTL formula $\varphi = \LTLglobally (a \wedge \LTLnext (\neg a \vee b))$, which requires that $a$ holds at every position, and at the next position either $\neg a$ or $b$ must hold.
For the QPTL formula $\exists a. \varphi$, we want to check if there exists an infinite assignment to $a$ such that the formula is satisfied.

If we naively apply state-wise existential quantification to remove $a$ from the automaton, the operation assumes that every evaluation of $a$ was part of the language.
This is because the transitions $\delta(q_0) = a \wedge q_1$ and $\delta(q_1) = (\neg a  \vee b)$ become $\delta'(q_0) = q_1$ and $\delta'(q_1) = \top$ after existentially quantifying $a$.
This completely ignores the fact that only $a$ being $\true$ at every position satisfies the original formula while existentially quantifying $a$ in $\delta(q_1)$.
This can result in the automaton accepting a superset of the original language, violating language preservation.
We now define normal forms for alternating automata where state-wise existential and universal quantifications are language preserving.

\begin{definition}[Existential Normal Form]\label{def:existential:knowledge:compilation}
    Let $\awa$ be an alternating automaton over $\Sigma$. 
    An existential normal form of $\awa$ for the Boolean variable $p \in \Sigma$ is an alternating automaton $\ekc$ such that $\mathcal{L}(\sexists p.\ekc)$ is equivalent to $\mathcal{L}(\exists p.\awa)$.
\end{definition}
An automaton is in existential normal form for a variable $p$ if applying state-wise existential quantification to $p$ preserves the language of the automaton. 
We denote the operator $\exists p. \awa$ as defined in \Cref{sec:preliminaries}.
The universal normal form is defined analogously: We require that state-wise \emph{universal} quantification preserves the language of the automaton.

\begin{definition}[Universal Normal Form]\label{def:universal:knowledge:compilation}
    Let $\awa$ be an alternating automaton over $\Sigma$. 
    A universal normal form of $\awa$ for the Boolean variable $p \in \Sigma$ is an alternating automaton $\akc$ such that $\mathcal{L}(\sforall p. \akc)$ is equivalent to $\mathcal{L}(\forall p. \awa)$.
\end{definition}

Before continuing with formal requirements for existential and universal normal forms, we highlight an observation for the automata usually used in verification algorithms.

\begin{proposition}\label{prop:nondeterministic:is:kc}
Every nondeterministic automaton is in existential normal form.
Every universal automaton is in universal normal form.
Every deterministic automaton is in both.
\end{proposition}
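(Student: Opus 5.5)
For the first claim I will prove the two inclusions $\mathcal{L}(\exists p.\awa)\subseteq\mathcal{L}(\sexists p.\awa)$ and $\mathcal{L}(\sexists p.\awa)\subseteq\mathcal{L}(\exists p.\awa)$ for a nondeterministic $\awa$, taking $\ekc=\awa$. The plan is to use the fact that, after projection to $Q$, every $\delta(q)$ is a disjunction of states. Hence every minimal satisfying assignment of $\delta(q)$ contains at most one state, and I may write $\delta(q)\equiv\bigvee_i(\ell_i\wedge q_i)$, where each $\ell_i$ is a conjunction of literals and each $q_i$ is a state (or no state). A run can then be pruned to a single branch, i.e.\ a sequence $q_0q_1\ldots$ of states together with letters.

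For the inclusion $\mathcal{L}(\exists p.\awa)\subseteq\mathcal{L}(\sexists p.\awa)$, I take a witness trace $\trace'$ and an accepting run of $\awa$ on it. At each node, the chosen child set satisfies $\delta(q)$ under the letter $\trace'[i]$. Since $\delta(q)\models\exists p.\delta(q)$, the same set also satisfies $\exists p.\delta(q)$ under $\trace'[i]\setminus\{p\}$. The run tree is unchanged, so acceptance is preserved. This direction holds for arbitrary alternating automata, and I will note that.

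The main step is the inclusion $\mathcal{L}(\sexists p.\awa)\subseteq\mathcal{L}(\exists p.\awa)$. Given an accepting run of $\sexists p.\awa$ on $\trace$, I first restrict it to a single path $q_0q_1q_2\ldots$. At position $i$, the set $\{q_{i+1}\}$ together with $\trace[i]$ satisfies $\delta(q_i)[p\mapsto 1]\vee\delta(q_i)[p\mapsto 0]$. I pick a value $b_i$ for which the corresponding disjunct is true, and set $\trace'[i]=\trace[i]\cup\{p\}$ if $b_i=1$ and $\trace'[i]=\trace[i]$ otherwise. Because the path visits only one state per level, these choices are independent and never conflict. This is exactly the point that fails for alternation, and it is where the care lies: I must argue via the disjunctive shape that a nonbranching run suffices, including the case where a minimal set has no state, which is a $\true$ leaf. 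The resulting path is an accepting run of $\awa$ on $\trace'$, because the set of visited states is unchanged.

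For universal automata I argue dually. Every $\delta(q)$ is $\bigwedge_i(\ell_i\vee q_i)$ in spirit. Complementing a universal automaton by dualizing $\delta$, which swaps $\wedge$ and $\vee$ and negates literals, gives a nondeterministic automaton with the dual (co-Büchi) condition. The state-wise operations also commute with dualization, since the dual of $\forall p.B$ is $\exists p.\tilde B$. Moreover, $\forall p.\awa=\neg\exists p.\neg\awa$ at the language level. The first claim, whose path argument does not depend on the acceptance condition, then transfers. Alternatively, I can argue directly: if every $p$-extension is accepted, then each branch of the run of $\sforall p.\awa$ is a universal product whose literal constraints hold for both values of $p$. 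Finally, a deterministic automaton is both nondeterministic and universal, because each $\delta(q)$ projected to $Q$ is a single state, so both claims apply.
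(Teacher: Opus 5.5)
Your proof is correct. For nondeterministic and deterministic automata it is the paper's own reasoning made precise. The paper only remarks that an accepting run is a single path, along which each state can pick its witness for $p$ independently. You prove both inclusions, prune the run via $\delta(q)\equiv\bigvee_i(\ell_i\wedge q_i)$, and treat the state-free disjunct.

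That letter-wise decomposition is exactly the reading of ``nondeterministic'' the claim needs, so your ``hence every minimal satisfying assignment has at most one state'' should be read letter by letter. Under a coarser reading, such as existentially projecting $\Sigma$ away, $(b\wedge q_1\wedge q_2)\vee(\neg b\wedge q_1)$ projects to $q_1$ and would qualify. The claim then fails once $q_1,q_2$ require $p$ and $\neg p$ at the next step.

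The genuine difference is the universal case. The paper argues directly on the universal run: it visits all states reachable through conjunctions, and each must tolerate both values of $p$. You instead dualize. Using $\widetilde{\forall p.B}=\exists p.\tilde B$, you get $\mathcal{L}(\sforall p.\awa)=\overline{\mathcal{L}(\sexists p.\tilde\awa)}$, apply the first claim to the nondeterministic co-B\"{u}chi dual $\tilde\awa$, and finish with the language identity $\forall p.\awa=\neg\exists p.\neg\awa$.

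Your route exposes the existential/universal symmetry and avoids a second path argument. The cost is invoking complementation by dualization and leaving the paper's B\"{u}chi setting. You correctly cover that by noting the path argument does not depend on the acceptance condition.

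The paper's direct route is self-contained but leaves its real content implicit, as does your alternative one-line direct sketch. That content is: the paths of the unique run of $\sforall p.\awa$ on $w$ are exactly the union, over all $p$-extensions $w'$ of $w$, of the paths of the run of $\awa$ on $w'$, because a single path reads only one letter per position.
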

For nondeterministic automata, an accepting run on an input word $w$ is a single path through the automaton. 
State-wise existential quantification preserves the language because each state along this path can independently choose a valuation for $p$ that satisfies its transition constraint.
For universal automata, an accepting run on an input word $w$ simultaneously visits all states reachable under the universal transitions. 
State-wise universal quantification preserves the language because each state must provide $p$ and $\neg p$.
For deterministic automata, an accepting run on an input word $w$ is a unique path, combining the cases of nondeterministic and universal automata.

\section{Syntactic Conditions for Knowledge Representation}\label{sec:conditions:for:knowledge:compilations}

In this section, we establish formal conditions that characterize when an alternating automaton is in existential or universal normal form.
These conditions form the foundation for the knowledge compilation algorithms presented in~\Cref{sec:computing:knowledge:compilations:for:safety:automata}.
Rather than operating on the automaton's individual runs, our conditions are defined on \emph{partial unfoldings} of the automaton.
Partial unfoldings follow the transition function while resolving either nondeterministic or universal choices.
These partial unfoldings provide a way to reason about potential conflicts in the automaton's structure without fully exploring all possible input words.
For both existential and universal quantification, we define corresponding existential and universal unfoldings and demonstrate how they capture sufficient conditions for the respective normal forms.

\begin{figure}
    \begin{subfigure}[b]{.56\linewidth}
    \resizebox{.9\textwidth}{!}{
        \tikzstyle{state}=[draw = none,  fill=none, minimum width=0.7cm, 
minimum height = 0.7cm,
align=center, thick]
\tikzstyle{pseudostate}=[draw, fill, minimum size=1.5pt, inner sep=0pt, outer sep=0pt, fill, show background rectangle]

\begin{tikzpicture}[>=stealth',auto, minimum size=0pt, inner sep=0pt, outer sep=0pt]

\node[state, draw=none] (p0){$q_0, 0$};

\node[state, draw=none] (p1) [below left = \y and 1.5*\x of p0]{$a, 1$};

\node[state] (p2) [ below = \y  of p0] {$q_0, 1$};

\node[state] (p3) [ below right = \y and 1.5*\x of p0] {$q_1, 1$};

\node[state, draw = none] (p4) [below right = \y and \x of p3]{$\neg a, 2$};

\node[state, draw=none] (p5) [below = \y of p1]{$a, 2$};

\node[state] (p6) [ below = \y  of p2] {$q_0, 2$};

\node[state] (p7) [ below = \y of p3] {$q_1, 2$};

\node[coordinate] (p8) [below = 0.3*\y of p6]{};

\node[coordinate] (p9) [below = 0.3*\y of p7]{};

\path 
(p0) edge[->,thick] coordinate[pos=0.25](p0l) node[above ] {%
      } (p1)
(p0) edge[->,thick] coordinate[pos=0.25](p0r) node[above ] {%
      } (p2)
(p0) edge[->,thick] coordinate[pos=0.25](p0r) node[above ] {%
      } (p3)

(p2) edge[->,thick] node[above ] {%
      } (p5)
(p2) edge[->,thick] node[above ] {%
      } (p6)
(p2) edge[->,thick] node[above ] {%
      } (p7)

(p3) edge[->,thick] node[above ] {%
      } (p4)

(p6) edge[-, thick, dotted] node[above ] {%
      } (p8)
(p7) edge[-,thick, dotted] node[above ] {%
      } (p9)
;
\end{tikzpicture}
    }
        \caption{}
        \label{fig:existential:unfolding}
    \end{subfigure}
    \hfill
    \begin{subfigure}[b]{.43\linewidth}
        \resizebox{0.9\textwidth}{!}{
        \tikzstyle{state}=[draw = none,  fill=none, minimum width=0.7cm, 
minimum height = 0.7cm,
align=center, thick]
\tikzstyle{pseudostate}=[draw, fill, minimum size=1.5pt, inner sep=0pt, outer sep=0pt, fill, show background rectangle]

\begin{tikzpicture}[>=stealth',auto, minimum size=0pt, inner sep=0pt, outer sep=0pt]

\node[state, draw=none] (p0){$q_0, 0$};

\node[state, draw=none] (p1) [below left = \y and 1.5*\x of p0]{$b, 1$};

\node[state] (p2) [ below = \y  of p0] {$q_1, 1$};

\node[state] (p3) [ below right = \y and 1.5*\x of p0] {$q_2, 1$};

\node[state, draw = none] (p4) [below = \y of p2]{$ a, 2$};

\node[state, draw = none] (p42) [below = \y of p3]{$\neg a, 2$};

\path 
(p0) edge[->,thick] coordinate[pos=0.25](p0l) node[above ] {%
      } (p1)
(p0) edge[->,thick] coordinate[pos=0.25](p0r) node[above ] {%
      } (p2)
(p0) edge[->,thick] coordinate[pos=0.25](p0r) node[above ] {%
      } (p3)

(p2) edge[->,thick] node[above ] {%
      } (p4)

(p3) edge[->,thick] node[above ] {%
      } (p42)

;
\end{tikzpicture}
    }
        \caption{}
        \label{fig:universal:unfolding}
    \end{subfigure}
    \caption{We sketch an existential (a) and universal unfolding (b) of the automata in \Cref{fig:running:example:exists} and \Cref{fig:running:example:universal}, respectively.}
\end{figure}
\subsection{Condition for Existential Normal Forms}\label{sec:syntactic:condition:existential}

The syntactic condition for existential normal forms is formalized on a special unfolding of the automaton, called \emph{existential unfolding}.
The general idea is to follow the transition function of the automaton, but resolving nondeterministic choices, i.e., whenever the transition function yields more than one choice for sets of future states, the symbolic existential run chooses exactly one such set.
Formally, this yields an infinite acyclic graph:

\begin{definition} [Existential Unfolding]
 An \emph{existential unfolding} of an alternating \Buchi\ automaton $\aut$ is a DAG $(V,E)$, with  $ V \subseteq (\states \cup L_{\Sigma}) \times \nats $,  $ (q_0,0) \in V $ and\\
  $E \subseteq \bigcup_{i \in \mathbb N} \Big[ (\states \times \{i\}) \times
  (\states \times \{i+1\})\Big]$\\
  $\phantom{lllllllllllllllll}\cup \Big[ (\states \times \{i\}) \times (L_{\Sigma} \times \{i+1\})\Big],$\\
and $V,E$ are smallest sets s.t. $\forall (q,i) \in V, post(q,i) = X,$ 
\begin{align*}
  &-\  X \in \minsat(\delta(q)) \text{ and } (X \times \{i + 1\}) \subseteq V, \text{ and}\\
  &-\ \{(q, i)\} \times (X \times \{i+1\}) \subseteq E
\end{align*}
\end{definition}

The vertices of the DAG are all states and literals for every timestep.
We transition from timesteps to timesteps by choosing a set of future states and literals, such that (1) the transition function is satisfied for every state, and (2) the edges point to the correct future vertices.
Note that we choose the literals for the variables in $\Sigma$ for each state, but do not enforce that the variable assignments are unique over all states in every branch of the unfolding.
This is necessary to find states that require contradictory valuations for $p$ at the same timestep.
\Cref{fig:existential:unfolding} sketches an existential unfolding of the automaton in \Cref{fig:running:example:exists}.
Every state and literal is duplicated for every timestep, and the edges follow the transition function of the automaton.
The unfolding follows all states of the conjunction in $q_0$ and chooses one of the disjuncts in $q_1$.
Using existential unfoldings, we can define the \emph{syntactic condition} for existential quantification. 

\begin{definition}[Existential Syntactic Condition]\label{def:syntactic:exists:condition}
An alternating automaton $\awa$ satisfies the existential syntactic condition for $p \in \Sigma$ if for all its existential unfoldings $(V, E)$ and for all $i \in \mathbb{N}$ it holds that either $(p, i) \notin V$ or $(\neg p, i) \notin V$.
\end{definition}
This condition requires that no run contains both $p$ and $\neg p$ at the same timestep.
Since all nondeterminism is resolved in the existential unfolding, finding dual literals at the same timestep implies that the automaton traversed through a conjunction of states and arrived in two states where different evaluations for $p$ are required.
This happens at timepoint $2$ in the existential unfolding in \Cref{fig:existential:unfolding}, where the automaton is in both $q_1$ and $q_2$ at the same time, which requires $\neg a$ and $a$, respectively.
This is a violation of the syntactic condition, and therefore the automaton in \Cref{fig:running:example:exists} is not in existential normal form for $a$.
This is a critical situation for existential quantification: The state-wise quantification would yield positive results for each state, but the transition functions for both states could be contradictory.
In~\Cref{sec:computing:knowledge:compilations:for:safety:automata}, we present an algorithm that identifies such conflicts and merges the conflicting states, which results in an automaton that satisfies the syntactic condition.

\subsection{Condition for Universal Normal Forms}\label{sec:syntactic:condition:universal}
The universal case follows a dual structure. 
Rather than resolving nondeterministic choices as in existential unfoldings, universal unfoldings resolve \emph{universal} choices, ensuring that, whenever two states occur at the same timepoint, they are preceeded by nondeterminism. 
This captures the scenario where different runs of the automaton, arising from nondeterministic choices at earlier points, allow different evaluations of the same variable at the same timestep.
We define the universal unfolding accordingly:

\begin{definition}[Universal Unfolding]\label{def:universal:unfolding}
 A \emph{universal unfolding} of an alternating \Buchi\ automaton $\aut$ is a DAG $(V,E)$, with
  $ V \subseteq (\states \cup  L_{\Sigma}) \times \nats $ and $ (q_0,0) \in V $,
  where

  $E \subseteq \bigcup_{i \in \mathbb N} \Big[ (\states \times \{i\}) \times
  (\states \times \{i+1\})\Big]$\\
  $\phantom{llllllllllllllllllll}\cup \Big[ (\states \times \{i\}) \times (L_{\Sigma} \times \{i+1\})\Big],$\\
and $V,E$ are smallest sets s.t. $\forall (q,i) \in V, post(q,i)=X$, 
\begin{align*}
  &- X \subseteq Q \cup L_\Sigma \text{ is a minimal set s.t. } \forall X' \in \minsat(\delta(q)),\\ 
  & ~~~~~~~~~~X \cap X' \cap ~Q \neq \emptyset,~ X' \cap L_\Sigma \subseteq X, \text{ and} \\
  &- (X \times \{i + 1\}) \subseteq V\text{ and }
   \{(q, i)\} \times (X \times \{i+1\}) \subseteq E
    \end{align*}
\end{definition}
The universal unfolding follows a dual structure to the existential case: Instead of resolving nondeterministic choices, it resolves \emph{universal} choices in the transition function.
For every minimal satisfying assignment $X'$ of $\delta(q)$, i.e., the sets of successor states, at least one state from $X'$ must be included in the chosen set $X$.
Moreover, each literal of each satisfying assignment must be part of the vertices.
This ensures that all nondeterministic branches are unfolded and, whenever more than one state appears in the vertices for the same timepoint, these states are reached via disjunctions.
The key difference from existential unfoldings is that universal unfoldings track states that can be reached through \emph{different nondeterministic choices}, capturing scenarios where distinct runs may allow different valuations at the same timestep -- this is the situation where state-wise universal quantification would yield $\false$ spuriously.
\Cref{fig:universal:unfolding} sketches the universal unfoldings of the automaton in \Cref{fig:running:example:universal}.
Using universal unfoldings, we can define the syntactic condition for universal quantification:

\begin{definition}[Universal Syntactic Condition]\label{def:syntactic:univ:condition}
An alternating automaton $\awa$ is in universal normal form for $p \in \Sigma$ if for all its universal unfoldings $(V, E)$ and for all $i \in \mathbb{N}$ it holds that either $(p, i) \notin V$ or $(\neg p, i) \notin V$.
\end{definition}

\Cref{fig:universal:unfolding} sketches the universal unfolding of the automaton from \Cref{fig:running:example:universal}, which does not satisfy the syntactic condition.
The universal unfolding resolves \emph{universal} choices rather than nondeterministic ones.
At each timestep, for each state $q$, we select a set $X$ of successor states that covers all nondeterministic branches: For every minimal satisfying assignment $X'$ of $\delta(q)$, at least one state from $X'$ must be included in $X$.
In the example, we reach $q_1$ and $q_2$ at timepoint $1$ and find the conflicting literals $a$ and $\neg a$ at timepoint $2$.
Note that we do not have $q_0, 1$ in the unfolding since we follow one state of each conjunction and the transition function of $q_0$ is $(b \wedge q_0 \wedge q_1) \vee (b \wedge q_0 \wedge q_2)$.

We close this section by showing that the syntactic conditions defined in this section imply the normal forms defined in the previous section (precisely, \Cref{def:syntactic:exists:condition} implies \Cref{def:existential:knowledge:compilation}, and~\Cref{def:syntactic:univ:condition} implies~\Cref{def:universal:knowledge:compilation}, respectively). 

\begin{theorem}
Let $\awa$ be an alternating automaton. 
If $\awa$ satisfies the existential syntactic condition, then $\awa$ is in existential normal form.
If $\awa$ satisfies the universal syntactic condition, then $\awa$ is in universal normal form.
\end{theorem}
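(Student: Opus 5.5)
We prove the two implications separately. In each case one inclusion holds for every alternating automaton, and the syntactic condition is needed only for the other.

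\textbf{Existential part.} We show $\mathcal{L}(\sexists p.\awa) = \mathcal{L}(\exists p.\awa)$.

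For $\supseteq$, take $\trace \in \mathcal{L}(\exists p.\awa)$ with witness $\trace' \in \mathcal{L}(\awa)$ and an accepting run $(\mathcal{T},r)$ of $\awa$ on $\trace'$. At every node labelled $q$, the children satisfy $\delta(q)$. Since $\delta(q) \rightarrow \exists p.\delta(q)$, the same children, with any $p$-literals deleted, satisfy $\delta'(q)$. Hence the pruned tree is an accepting run of $\sexists p.\awa$ on $\trace$. The branches are unchanged, so the acceptance condition is preserved.

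For $\subseteq$, take an accepting run $(\mathcal{T},r)$ of $\sexists p.\awa$ on $\trace$. At every node labelled $q$ the children form a set $Y$ satisfying $\exists p.\delta(q)$, so $Y \cup \{\ell\}$ satisfies $\delta(q)$ for some $\ell \in \{p,\neg p\}$ (or with no $p$-literal needed at all). We fix such a choice, shrink to a minimal satisfying assignment $X \subseteq Y\cup\{\ell\}$ of $\delta(q)$, and prune the subtree accordingly. This yields a run-like tree of $\awa$ in which each node carries a $p$-literal choice. Collapsing nodes by (state, depth) and taking the chosen minimal sets $X$ as successors gives an existential unfolding $(V,E)$ of $\awa$.

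A subtlety arises here. Two tree nodes with the same state and depth may have made different choices, while an unfolding assigns one $post$ per vertex. We handle this by first normalising the run to be memoryless: for each pair (state, depth) we fix one subtree. This is the standard argument that accepting runs can be taken as DAGs, and it is valid here because Büchi acceptance on branches is preserved when we pick one of several accepting subtrees for identical (state, suffix) pairs.

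By the syntactic condition, at each depth $i+1$ at most one of $(p,i+1),(\neg p,i+1)$ lies in $V$. So all chosen literals at a given depth agree, and we can define $\trace'[i]$ as $\trace[i]$ extended by that common $p$-value (arbitrary if neither occurs). The tree is then an accepting run of $\awa$ on $\trace'$, with $\trace'|_{\Sigma\setminus\{p\}}=\trace$.

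\textbf{Universal part.} This is dual. For $\subseteq$, we use $\forall p.\delta(q) \rightarrow \delta(q)[p\mapsto b]$ for both values $b$. Hence an accepting run of $\sforall p.\awa$ on $\trace$ is an accepting run of $\awa$ on every $p$-extension $\trace'$.

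For $\supseteq$, we argue by contraposition via the dual game view. Suppose $\trace \notin \mathcal{L}(\sforall p.\awa)$. Then the refuter (the player resolving conjunctions, with the prover resolving disjunctions) has a winning strategy in the acceptance game of $\sforall p.\awa$ on $\trace$. At a state $q$ the refuter refutes $\forall p.\delta(q)$, that is, picks a hitting set of the minimal satisfying assignments, which may make $\delta(q)$ false under some value $b_q$ of $p$. Unfolding this refuter strategy from a state $q$ selects a minimal set $X$ hitting every $X' \in \minsat(\delta(q))$ on its state part, together with the literals of $\delta(q)$. This is exactly the structure of a universal unfolding.

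By the universal syntactic condition, the $p$-values $b_q$ demanded at the same timestep never conflict. Hence there is a single extension $\trace'$ on which the refuter wins in $\awa$, so $\trace' \notin \mathcal{L}(\awa)$ and therefore $\trace \notin \mathcal{L}(\forall p.\awa)$.

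\textbf{Main obstacle.} The hardest step is the faithful correspondence between a pruned run, or a refuter strategy, and a legitimate unfolding. The definitions require minimal sets and a single $post$ per vertex, whereas runs are trees. We would handle this by the memoryless or positional normalisation described above: positional determinacy of the Büchi game in the universal case, and DAG-shaped runs in the existential case. After that normalisation, the conflict-freeness argument at each depth is immediate.
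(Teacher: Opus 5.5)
Your existential half follows the paper's argument: the easy inclusion, then a common witness for $p$ read off the existential unfolding induced by an accepting run of $\sexists p.\awa$. You do it more carefully than the paper, with minimal sets and one $post$ per vertex. One correction there: picking an arbitrary accepting subtree for each (state, depth) does not preserve B\"uchi acceptance. A path in the resulting DAG can be spliced from infinitely many branches and avoid $F$. You need the standard existence of memoryless accepting runs (trivial for safety automata).

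The genuine gap is in the universal half, at ``this is exactly the structure of a universal unfolding''. In the game for $\sforall p.\awa$ on $\trace$, the refuter's set at $(q,i)$ only has to hit the state parts of those $X'\in\minsat(\delta(q))$ consistent with $\trace[i]$ and the chosen value of $p$. A universal unfolding instead uses a \emph{minimal} set hitting the state parts of \emph{all} of $\minsat(\delta(q))$, regardless of the letter. No such minimal set need contain the states the refuter moves to. So the refuter's play can visit states that occur in no universal unfolding, and the syntactic condition says nothing about the $p$-values they demand.

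This cannot be repaired within the paper's definitions, because the universal half of the statement is false. Take $\Sigma=\{a,p\}$, all states accepting, and $\delta(q_0)=q_2\wedge(\neg a\vee q_1)$, $\delta(q_1)=q_3\vee q_4$, $\delta(q_2)=q_2$, $\delta(q_3)=p\wedge q_2$, $\delta(q_4)=\neg p\wedge q_2$. Since $\minsat(\delta(q_0))=\{\{\neg a,q_2\},\{q_1,q_2\}\}$, the only admissible $post(q_0,0)$ is $\{\neg a,q_2\}$. The unique universal unfolding is therefore $(q_0,0)$, $(\neg a,1)$ and $(q_2,i)$ for $i\ge 1$. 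No $p$-literal occurs, so the condition holds.

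Yet $\awa$ accepts every word: from $q_1$ it moves to $q_3$ or $q_4$ according to $p$ at position $2$. So $\mathcal{L}(\forall p.\awa)=(2^{\{a\}})^\omega$. In $\sforall p.\awa$, however, $\delta'(q_3)=\delta'(q_4)=\false$, and on reading $a$ the state $q_0$ must enter $q_1$. So every word with $a$ at position $0$ is rejected.

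The refuter's winning strategy is exactly what your argument would need to capture. It enters $q_1$, hitting $\{q_1,q_2\}$ relative to the letter, and then $q_3$ and $q_4$, which need opposite values of $p$ at position $2$. No universal unfolding sees any of this. The paper's proof only says the universal case is ``analogous'', so it does not supply this step either. A correct universal statement needs a different notion of universal unfolding.
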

\begin{proof}
We prove the existential case; the universal case is analogous.
Suppose $\awa$ is an alternating automaton on input alphabet $\Sigma$ that satisfies the condition of \Cref{def:syntactic:exists:condition}.  To establish \Cref{def:existential:knowledge:compilation}, we must show that $\mathcal{L}(\exists^\circ p. \awa)$ and $\mathcal{L}(\exists p. \awa)$ coincide.  
Recall that $\exists^\circ p. \awa$ is simply the automaton $\awa$ in which the transition function of every state $q$ has been modified from $\delta(q)$ to $\exists p. \delta(q)$.  
It follows that for any alternating automaton $\awa$, we always have $\mathcal{L}(\exists p. \awa) \subseteq \mathcal{L}(\exists^\circ p.\awa)$: For every word $w \in (\Sigma \setminus p)^\omega$ that belongs to $\mathcal{L}(\exists p.~\awa)$, there exists a sequence of assignments to $p$, say $a_p$, corresponding to an accepting run of $\awa$ on $w \sqcup a_p$, i.e., $w$ augmented with $a_p$.  
The $j$\textsuperscript{th} element of this sequence, namely $a_p[j]$, serves as a witness assignment to $p$ in $\exists p. \delta(q)$ for every state $q$ of $\awa$ visited in the $j$\textsuperscript{th} step in the same accepting run of $w \sqcup a_p$ on $\awa$.  
Hence, the word $w$ is in $\mathcal{L}
(\exists^\circ p. \awa)$.
To complete the proof, it remains to show that $\mathcal{L}(\exists^\circ p. \awa) \subseteq \mathcal{L}(\exists p. \awa)$.
This is where \Cref{def:syntactic:exists:condition} is used.

Let $w \in (\Sigma \setminus p)^\omega$ be a word in $\mathcal{L}(\exists^\circ p. \awa)$.
Consider an accepting run of the automaton $\exists^\circ p. \awa$ on $w$.
At every step $j$ of this run, the transition formula $\exists p. \delta(q)$ is satisfied for every state $q$ that appears at the $j$\textsuperscript{th} step of the run.  In general, the witness for $p$ used to satisfy $\exists p. \delta(q)$ at step $j$ may vary with the state $q$. 
However, since $\awa$ satisfies the condition of \Cref{def:syntactic:exists:condition}, we know that in every existential unfolding of $\awa$, there is no level $j$ at which both $p$ and $\neg p$ appear.
This allows us to construct a sequence $a_p$ of Boolean assignments to $p$, such that $a_p[j]$ serves as the common witness for $p$ in $\exists p. \delta(q)$ for every state $q$ appearing at the $j$\textsuperscript{th} step of the above accepting run of $\exists^\circ p. \awa$.
Specifically, if only $p$ appears in the $j$\textsuperscript{th} step of the existential unfolding corresponding to the accepting run, we set $a_p[j] = 1$; otherwise, we set $a_p[j] = 0$.
Since the same value $a_p[j]$ serves as the witness for $p \in \exists p. \delta(q)$ for all states $q$ appearing at the $j$\textsuperscript{th} step of the run, it follows that the sequence $a_p$ is a witness sequence of Boolean assignments to $p$ in an accepting run of $\awa$ on $w \sqcup a_p$. 
This implies $w$ is in $\mathcal{L}(\exists p. \awa)$.
\end{proof}

The syntactic conditions are sufficient for the automaton to be in normal form.
In the following section, we develop algorithms for identifying violations of the syntactic conditions and how to fix them, allowing us to transform automata into their normal forms. 

\section{Knowledge Compilation Algorithms for Safety Automata}\label{sec:computing:knowledge:compilations:for:safety:automata}
In the following, we focus on alternating safety automata.
While the definitions in \Cref{sec:knowledge:compilations:for:existential:and:universal:quantification} apply to all acceptance conditions, the following algorithms operate on the transition structure of the automata, not the acceptance condition, making safety the most suitable choice.
The main idea for computing existential and universal knowledge compilations is twofold: We first identify a set of states that need to be conjunctively or disjunctively merged in the automaton, and then perform a subset construction of the automaton for the identified states.
We introduce \emph{conflict states} for the first part and present fixpoint algorithms over the transition function of an automaton to iteratively collect conflicting states in the automaton.
The second part is two modified versions of the Miyano-Hayashi construction~\cite{DBLP:journals/tcs/MiyanoH84}, a construction that translates alternating to nondeterministic automata, for only a subset of states and the safety acceptance condition. 

\subsection{Existential Knowledge Compilation}
We begin with the definition of existentially conflicting states in an automaton.

\begin{definition}[Existential Conflict]\label{def:existential:conflicts}
Two states $q,q' \in \states$ are existentially \emph{in conflict for $p$} if there exists an existential unfolding $(V, E)$ of $\mathcal{A}$ and $i\in \mathbb{N}$ s.t. $(q, i)$ and $(q', i) \in V$ and there exists an $i' \in \mathbb{N}$ s.t.  $(p, i') \in \mathit{reach}(q,i)$ and $(\neg p, i') \in \mathit{reach}(q',i)$.
\end{definition}

The states are in existential conflict if they can be reached at the same timepoint in an existential unfolding, i.e., they are below a conjunction, and there is a future timepoint on which $p$ holds on the path from $q$ and $\neg p$ holds on a path from $q'$.
In the example in \Cref{fig:running:example:exists}, $q_0$ and $q_1$ are in existential conflict: There exists an existential unfolding where $(q_0, 1)$ and $(q_1,1)$ are reached, and $(a, 1)$ is reachable from $(q_0,1)$ and $(\neg a, 1)$ is reachable from $(q_1, 1)$.
We show that we can compute whether two states are in existential conflict.

\begin{restatable}{theorem}{existentialconflict}\label{th:existential:conflict}
Deciding whether two states of an alternating automaton $\awa$ are existentially in conflict is quadratic in the number of states of $\awa$.
\end{restatable}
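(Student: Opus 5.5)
We decide existential conflict by reducing it to reachability in a product graph over pairs of states. The key observation is that existential conflict between $q$ and $q'$ decomposes into two independent pieces. First, $q$ and $q'$ must be \emph{jointly reachable at the same timepoint} in some existential unfolding. Second, from $(q,i)$ and $(q',i)$ there must be paths of the \emph{same length} ending in the literal $p$ and the literal $\neg p$, respectively.

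Both pieces are statements about synchronous pairs of paths in the unfolding. Because an existential unfolding may pick any minimal satisfying assignment independently at every vertex, such pairs of paths are determined only by pairs of states. So we define a \emph{pair graph} $G$ on vertices $Q \times Q$, together with a small number of auxiliary vertices for literals. There is an edge $(s,s') \to (t,t')$ in two cases:
\begin{itemize}
\item either $s=s'$ and some $X \in \minsat(\delta(s))$ contains both $t$ and $t'$ (a split below a conjunction);
\item or $s\neq s'$ and $t \in \bigcup post(s)$, $t' \in \bigcup post(s')$ (two independent synchronous steps).
\end{itemize}

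With $G$ in hand, the two pieces become two searches.
\begin{enumerate}
\item Compute the set $J \subseteq Q\times Q$ of pairs reachable from $(q_0,q_0)$ in $G$. We need to argue that $(q,q')\in J$ if and only if some existential unfolding contains $(q,i)$ and $(q',i)$ for some $i$.
\begin{itemize}
\item For $\Leftarrow$: the two vertices have a common ancestor in the unfolding, namely at worst $(q_0,0)$. Take the last common ancestor; from there the two paths go through a split, and after that they proceed independently.
\item For $\Rightarrow$: we build the unfolding by choosing, at each visited vertex, the minimal satisfying assignment used by the path. Here we use that vertices are indexed by (state, time), so two paths visiting the same $(s,j)$ must make consistent choices. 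This is only a problem when $s=s'$ at the same time, and in that case the paths coincide up to the split, so choosing one $X$ containing both successors suffices.
\end{itemize}
\item Compute the set $R \subseteq Q\times Q$ of pairs $(s,s')$ from which $G$ can reach a pair $(t,t')$ with $p \in l(t)$ and $\neg p \in l(t')$. This is a backward search on $G$. Starting it from the pair $(q,q')$ itself matches the definition, since no common history is required after time $i$; the relevant steps are the independent ones, plus same-state splits when $t=t'$ again. Then $q,q'$ are in conflict if and only if $(q,q')\in J\cap R$.
\end{enumerate}

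For the complexity bound, $G$ has $|Q|^2$ vertices, and both searches are linear in its size. Edges can be represented implicitly via the successor sets $\bigcup post(s)$, so no explicit quadratic blow-up in edges is needed. Computing $post$, $l$ and the "co-occurring successors" relation is a per-state preprocessing of the transition formulas and is independent of $|Q|^2$. This gives a decision procedure quadratic in the number of states; indeed all conflicting pairs are obtained at once.

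The main obstacle is the correctness of step 1, the $\Rightarrow$ direction. An existential unfolding fixes a \emph{single} minimal satisfying assignment per vertex $(s,j)$, so we must check that the pair-graph path can always be realized without two paths demanding different choices at the same vertex. We would handle this by induction on the length of the $G$-path, maintaining the invariant that the two tracked vertices are either equal or distinct. When they are distinct, their choices are independent because they are different vertices. When they are equal, a single shared $X$ is chosen, and the split edge guarantees that this $X$ contains both required successors.
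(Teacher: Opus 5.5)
Your overall route is the paper's. The paper also reduces the question to reachability in a product over $Q\times Q$ starting at $(q_0,q_0)$. It uses a Boolean flag recording that the pair $(q,q')$ has been passed, which is your $J\cap R$, and accepting pairs of complementary literals $(p,\neg p,\true)$, $(\neg p,p,\true)$. Your distinction between split edges for equal states and independent edges for distinct states is in fact more precise than the paper's transition relation.

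However, your target condition in step~2 is wrong when the two tracked states coincide. If $t=t'$ at time $j$, then $(t,j)$ is a single vertex of the unfolding with a single chosen $X\in\minsat(\delta(t))$. So $(p,j+1)$ and $(\neg p,j+1)$ both lie below it only if one minimal satisfying assignment contains both literals. The condition $p\in l(t)$ and $\neg p\in l(t)$ lets them come from different assignments. Concretely, let $\delta(q_0)=q_1\wedge q_2$, $\delta(q_1)=\delta(q_2)=q_3$, $\delta(q_3)=p\vee\neg p$. Every existential unfolding contains the single vertex $(q_3,2)$, which selects exactly one of $(p,3)$ and $(\neg p,3)$, so $q_1$ and $q_2$ are not in conflict. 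Yet $(q_1,q_2)\in J$ by the split at $q_0$, and the independent step $(q_1,q_2)\to(q_3,q_3)$ hits your target. Hence the ``if'' direction of ``conflict iff $(q,q')\in J\cap R$'' fails.

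The repair is to impose on the final literal step the same consistency you already impose on state successors. The targets should be the pairs $(t,t')$ with $t\neq t'$, $p\in l(t)$ and $\neg p\in l(t')$, together with the pairs $(t,t)$ for which some $X\in\minsat(\delta(t))$ satisfies $\{p,\neg p\}\subseteq X$. This is what your otherwise unused auxiliary literal vertices should encode.

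Second, your complexity argument does not deliver a quadratic running time. $G$ has $|Q|^2$ vertices, but an independent step from $(s,s')$ has $|\bigcup post(s)|\cdot|\bigcup post(s')|$ successors. So $G$ can have $\Theta(|Q|^4)$ edges, and representing them implicitly does not let a search skip them. The paper's ``quadratic'' refers only to the number of states of the product automaton, whose emptiness check is then polynomial. If you want an actual time bound, factor each independent step into two half-steps (move the first component, then the second), which gives $O(|Q|^3)$ edges.

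A smaller omission, shared with the paper: completing the unfolding away from the two tracked paths presupposes that every state occurring there can be unfolded. This fails if a chosen $X$ contains a state all of whose unfoldings run into some $\delta(s)=\false$. Removing such dead states by a simple fixpoint before building $G$ takes care of this.
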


\begin{proof}
We construct an automaton that decides whether two states are in existential conflict.
Let $\aut = (\Sigma,\states,q_0,\delta,\final)$ be an alternating \Buchi automaton, $q, q' \in \states$ be two states, and $p \in \Sigma$ be a letter of the alphabet.
The nondeterministic safety automaton $\aut'$ is non-empty if $q, q'$ are in conflict, where $\aut' = (\Sigma', \states',q_0',\delta',F')$ with $\Sigma' = \emptyset$ and 
\begin{align*}
    \states' &= Q \times Q \times Bool\\
    q_0' &= (q_0, q_0, \false)\\
    \delta'(q_1, q_2, b) &= \{(q_1', q_2', b') \mid b' := b \vee (q_1 = q \wedge q_2 = q')\\
&~~~~~~~\mathit{ and}~\exists X \subseteq Q \cup L_{AP}~\mathit{s.t.} \\
&~~~~~~~~~~q_1', q_2' \in X, ~ X \vDash \delta(q_1) \mathit{, and}~X' \vDash\delta(q_2)\} \\
    F' &= \{(p, \neg p, \true), (\neg p, p, \true)\} 
\end{align*}
The automaton is quadratic in the number of states in $\mathcal{A}$.
\end{proof}

Note that \Cref{def:existential:conflicts} is only a sufficient condition based on the syntactic structure of the automaton.
We also consider states to be in conflict that are semantically unreachable, i.e., the existential unfolding does not represent a run of the automaton.
However, the condition enables us to effectively compute an over-approximation of the states that are in conflict.

\begin{theorem}\label{th:existential:normal:form}
    Let $\awa = (\Sigma, \states, q_0, \delta, F)$ be an alternating safety automaton and $p \in \Sigma$ a Boolean variable. If $\awa$ has no states that are existentially in conflict for $p$, then $\awa$ is in existential normal form for $p$.
\end{theorem}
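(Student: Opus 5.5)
The plan is to reduce the statement to the earlier theorem showing that the existential syntactic condition (\Cref{def:syntactic:exists:condition}) implies existential normal form. It then suffices to show: if no two states of $\awa$ are existentially in conflict for $p$, then $\awa$ satisfies the existential syntactic condition. I argue by contraposition. Suppose some existential unfolding $(V,E)$ and some level $i'$ have both $(p,i')\in V$ and $(\neg p,i')\in V$; I will exhibit two states that are existentially in conflict.

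\textbf{Step 1: find the parents of the two literals.} By construction of the unfolding, every literal vertex $(\ell,i')$ has an incoming edge from some state vertex at level $i'-1$. Fix $(q_a,i'-1)$ with an edge to $(p,i')$ and $(q_b,i'-1)$ with an edge to $(\neg p,i')$. Both lie in $V$ at the common level $i=i'-1$. Moreover $(p,i')\in \mathit{reach}(q_a,i)$ and $(\neg p,i')\in\mathit{reach}(q_b,i)$ via one edge each.

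\textbf{Step 2: read off the conflict.} By \Cref{def:existential:conflicts}, $q_a$ and $q_b$ are existentially in conflict for $p$: the witnessing unfolding is $(V,E)$ itself, with level $i$ and future level $i'$.

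\textbf{Step 3: rule out the case $q_a=q_b$.} This is the main obstacle, because the definition speaks of ``two states''. If $q_a=q_b=q$, then in the unfolding the chosen $X\in\minsat(\delta(q))$ contains both $p$ and $\neg p$. I would argue this is harmless in one of two ways:
\begin{itemize}
\item Read \Cref{def:existential:conflicts} as allowing $q=q'$. A state is then in conflict with itself, which is consistent with the quadratic product in \Cref{th:existential:conflict}, where the pair $(q,q)$ is permitted.
\item Observe that a minimal satisfying assignment containing complementary literals corresponds to an unsatisfiable conjunct when literals are interpreted over an actual letter. Such a choice contributes no run of either $\awa$ or $\sexists p.\awa$, so it can be discarded without affecting either language. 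Concretely, the preceding theorem's argument only uses unfoldings induced by accepting runs, and these never pick such an $X$.
\end{itemize}
I expect to adopt the second option, since it keeps the definition as stated. To do so I would restate the earlier theorem's proof as needing the syntactic condition only on run-induced unfoldings.

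\textbf{Step 4: conclude.} Once the syntactic condition holds, the earlier theorem yields $\mathcal{L}(\sexists p.\awa)=\mathcal{L}(\exists p.\awa)$, so $\awa$ is in existential normal form. The safety hypothesis $F=\states$ plays no essential role beyond matching the setting of the section. Acceptance is transferred verbatim in the earlier theorem, because the run tree is unchanged and only the $p$-witnesses are fixed uniformly per level.
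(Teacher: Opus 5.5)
Your proposal is correct, but it is organized differently from the paper's proof. The paper does not invoke the earlier theorem (syntactic condition implies normal form); it argues directly. It gets $\lang(\exists p.\awa)\subseteq\lang(\sexists p.\awa)$ by reusing the run tree. The converse is by contradiction: from an accepting run of $\sexists p.\awa$ on a word outside $\lang(\exists p.\awa)$, the paper asserts that the run must split conjunctively at some level $i$ into states $q_1,q_2$ whose branches demand $p$ and $\neg p$ at a common later level $i'$. It then reads off the conflict at that split point in the unfolding that follows the run.

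You instead factor the statement as ``conflict-freeness implies the condition of \Cref{def:syntactic:exists:condition}'' plus the earlier theorem. Your key step is local: the parents at level $i'-1$ of $(p,i')$ and $(\neg p,i')$ already witness \Cref{def:existential:conflicts}. So you never need to locate a branching point. The step the paper only asserts (non-membership forces contradictory requirements on $p$ at one level) is supplied by the explicit uniform witness $a_p$ of the earlier proof. Your route also shows that, up to one degenerate case, conflict-freeness is equivalent to the syntactic condition, so the theorem is essentially a corollary.

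That degenerate case, a chosen $X\in\minsat(\delta(q))$ containing both $p$ and $\neg p$, is silently avoided by the paper because it only builds unfoldings from runs. Your second option handles it correctly. In the unfolding induced by a run of $\sexists p.\awa$, each $X$ can be chosen among the atoms made true by $w[j]$ and the local witness for $p$, so the two parents are necessarily distinct. Your first option is logically sound too, but it makes the notion much broader than intended: since $\mathit{reach}(q_0,0)=V$, the initial state would conflict with itself as soon as any unfolding violates the syntactic condition.

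One small correction to Step~4: to pass from a run tree to an unfolding you must first make the choices uniform per pair (state, level), so the run is not literally unchanged. With $F=\states$ this is immediate (reuse one subtree for all occurrences of a pair), and this is exactly where safety helps. For B\"uchi acceptance you would need the existence of memoryless runs.
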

\begin{proof}
   We prove that $\lang (\exists p. \awa) = \lang(\sexists p. \awa)$.
   $\lang(\exists p. \awa) \subseteq \lang(\sexists p. \awa)$:
   Let $w \in \lang(\exists p. \awa)$. Then there exists a $w' \in (2^\Sigma)^\omega$ with $w'|_{\Sigma \setminus \{p\}} = w$ and $w' \in \mathcal{L}(\awa)$. Let $(\mathcal{T}, r)$ be the accepting run tree of $\awa$ on $w'$. For every node $n \in \mathcal{T}$, the set $S$ of children of $n$ satisfies $\delta(r(n))$. Recall that the state-wise existential quantification $\sexists p. \awa$ shares the same set of states as $\awa$, but replaces the transition function with $\delta'(r(n)) = \exists p. \delta(r(n))$. Since $\delta(r(n)) \rightarrow \exists p. \delta(r(n))$ is a tautology, and $S$ satisfies $\delta(r(n))$, $S$ also satisfies $\delta'(r(n))$. Thus, $(\mathcal{T}, r)$ is an accepting run tree of $\sexists p. \awa$ on $w$.
   $\lang(\sexists p. \awa) \subseteq \lang(\exists p. \awa)$: We prove this by contradiction. 
   Assume $\lang(\sexists p. \awa) \not\subseteq \lang(\exists p. \awa)$. 
   Thus, there exists a word $w \in \lang(\sexists p. \awa) \setminus \lang(\exists p. \awa)$. 
   Since $w \in \lang(\sexists p. \awa)$, there exists an accepting run tree $(\mathcal{T}, r)$ of $\sexists p. \awa$ on $w$. 
   Also, since $w \not\in \lang(\exists p. \awa)$, and $\sexists p.$ only changes the transition function regarding to $p$, the transition function of one state of one of the branches of $(\mathcal{T}, r)$ evaluated to $\false$, hence rejecting the word $w$. 
   Specifically, there exists a timestep $i \in \mathbb{N}$ where $(\mathcal{T}, r)$ branches conjunctively to states $q_1, q_2$, s.t. for some timestep $i' \geq i$, the branch from $q_1$ requires $p$ at $i'$ and the branch from $q_2$ requires $\neg p$ at $i'$. 
   We then can construct the existential unfolding $(V,E)$ of $\awa$ that follows the run tree $(\mathcal{T}, r)$ by guessing the correct nondeterministic choices in the transition function and following the universal branches.
   Consequently, this existential unfolding contains both $(q_1, i)$ and $(q_2, i)$ and, therefore, $(p, i')$ and $(\neg p, i')$. This is a contradiction to $\awa$ being in existential normal form.
\end{proof}

\begin{algorithm}[t]
            \caption{Existential Fixpoint Algorithm}
            \label{alg:ex-fixpoint}
        \begin{mycode}
let existentialConfs($\mathcal{A} = (\Sigma, Q, q_0, \delta, F), p \in \Sigma)$ := 
@@ let confs = $\emptyset$
@@ for $q \in Q$ do $l[q, 0]$ = $\{l(q) \cap L_{\{p\}}\}$ 
@@ for $ i = 1; i$++ do 
@@@@ for $q \in Q$ do
@@@@@@ $l[q, i] =  \bigcup_{X \in post(q)} \{ \bigcup_{q' \in X} \bigcup_{u \in l[q', i-1]} u\}$  
@@@@ for $q \in Q$ do
@@@@@@ if $\exists X \in post(q). \exists q', q'' \in X.$ 
@@@@@@@@@@@@@@@@@@@@@@@@@@@@@@@@ $l[q', i-1] \neq l[q'', i-1]$ do
@@@@@@@@ confs = confs $\cup~\mathit{reach}(q') \cup \mathit{reach}(q'') \cup \{q', q''\}$
@@@@ if $\exists i'< i. \forall q' \in Q. l[q', i] = l[q', i']$ return confs 
\end{mycode}%
\end{algorithm}

\paragraph{Computing Existential Conflicts.}
We define a fixpoint algorithm (see Alg.~\ref{alg:ex-fixpoint}) over the alternating automaton that computes a set of states $M$ such that the states in $Q \setminus M$ are not in conflict.
The algorithm maintains a labeling function $l$ that maps each state and timestep to a set of sets of literals.
Initially, for each state $q$, we set $l[q, 0]$ to contain only the literals appearing in $q$'s transition function $\delta(q)$.
In each iteration $i$, we compute $l[q, i]$ by considering all minimal satisfying assignments $X$ of $\delta(q)$ (obtained via $post(q)$), and for each successor state $q' \in X$, we take the union of all label sets $u \in l[q', i-1]$ from the previous iteration.
This propagates the reachable literals backward through the automaton's transition structure.
After updating all labels, we check for conflicts: If there exists a state $q$ with two successors $q', q''$ in some minimal satisfying assignment $X$ such that $l[q', i-1] \neq l[q'', i-1]$, then these successors are reached conjunctively but have different reachable literals at the same timepoint, indicating a conflict.
We add both $q'$ and $q''$, and all their successors, to the conflict set.
The algorithm terminates when a fixpoint is reached, i.e., when there exists a previous iteration $i'$ such that $l[q', i] = l[q', i']$ for all states $q'$.
This occurs after all loop combinations in the automaton have been processed.

We demonstrate the core ideas of the algorithm on our running example in \Cref{fig:running:example:exists} with $p = a$. We initialize the labeling function $l$ with the literals of $L_{\{a\}}$ appearing in each state's transition function: $l[q_0, 0] = \{\{a\}\}$ and $l[q_1, 0] = \{\{\neg a\}\}$. The algorithm then enters the main loop (line 5) for $i = 1$. Since $post(q_0) = \{\{q_0, q_1\}\}$, it sets $l[q_0, 1] = \{l[q_0, 0] \cup l[q_1, 0]\} = \{\{\neg a, a\}\}$. The state $q_1$ has no successor states, so $l[q_1, 1] = \emptyset$. Proceeding to the conflict detection (line 8), the algorithm checks if any two states in the conjunction $X = \{q_0, q_1\}$ have differing labels in the previous step $(i = 0)$. Since $l[q_0, 0] = \{\{a\}\} \neq \{\{\neg a\}\} = l[q_1, 0]$, the states $q_0$ and $q_1$ are added to the conflict set. In the next iteration of the loop $(i = 2)$, the labeling remains the same, i.e., $l[q_0, 2] = \{\{\neg a, a\}\}$ and $l[q_1, 2] = \emptyset$. Both states would once again be recognized as conflicts for the previous iteration $(i = 1)$. The termination condition (line 11) evaluates to $\true$ since $l[q_0, 1] = l[q_0, 2]$ and $l[q_1, 1] = l[q_1, 2]$, returning the conflict set $\{q_0, q_1\}$.

\paragraph{Existential Compilation of Alternating Safety Automata.}

We adapt the Miyano-Hayashi construction~\cite{DBLP:journals/tcs/MiyanoH84} for knowledge compilation for a set of conflicting states.
We use a subset construction to define the set of conflicting states $M$.
The set $M$ is such that for no $q, q' \in \states \setminus M$ it holds that $q$ is in existential conflict with $q'$.
We convert the transition function of each state to a conjunction of conflict states combined with non-conflict states as defined in the original transition function.
For every conjunction of states, we build the conjunction of their transition functions and proceed to the resulting successor states.

\begin{theorem}\label{th:existential:knowledge:compilation}
Let $\mathcal{A} = (\Sigma, \states, q_0, \delta, F)$ be an alternating safety automaton, and let $M \subseteq Q$ be an over-approximation of the existential conflicts in $Q$ for $p \in \Sigma$.
There exists an alternating automaton of exponential size in $M$ that is in existential normal form for $p$.
\end{theorem}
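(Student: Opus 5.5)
We construct the automaton explicitly by a partial Miyano--Hayashi subset construction restricted to $M$. Then we show two things: it is language-equivalent to $\awa$, and it has no existential conflicts for $p$. \Cref{th:existential:normal:form} then gives existential normal form.

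\textbf{Construction.} Define $\awa^\exists = (\Sigma, Q', q_0', \delta', Q')$ with
\[
Q' = (Q \setminus M) \cup (2^M \setminus \{\emptyset\}).
\]
A state $S \subseteq M$ stands for the conjunction of the states in $S$. The initial state is $q_0' = q_0$ if $q_0 \notin M$, and $q_0' = \{q_0\}$ otherwise. Take a positive formula $\theta \in \pbf{Q \cup L_\Sigma}$, put it in disjunctive normal form over $\minsat$, and in each minimal satisfying set $X$ replace all states $X \cap M$ by the single macro-state $X \cap M$ (dropping it if empty). Call the result $\mathit{tr}(\theta)$. Set
\[
\delta'(q) = \mathit{tr}(\delta(q)) \text{ for } q \in Q\setminus M, \qquad \delta'(S) = \mathit{tr}\Big(\textstyle\bigwedge_{q\in S}\delta(q)\Big).
\]
The new state set has size $|Q\setminus M| + 2^{|M|}$, which is exponential only in $M$. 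Since all states are accepting, the safety condition carries over unchanged.

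\textbf{Language equivalence.} We convert run trees in both directions, level by level. Given an accepting run of $\awa$ on $w$, we produce one of $\awa^\exists$. At each node labelled $S$, the children of the copies of $q\in S$ in the original run together satisfy $\bigwedge_{q\in S}\delta(q)$. Their minimal-satisfying subsets then give a disjunct of $\mathit{tr}(\cdot)$, in which the $M$-states are grouped into one macro-state. Conversely, a node $S$ of an $\awa^\exists$-run is expanded into one node per $q\in S$, each inheriting the successors and literals that witness $\delta(q)$. Safety makes this easy: every infinite branch is accepting, so only finiteness and consistency with the word need checking. This is the standard Miyano--Hayashi correctness argument restricted to $M$.

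\textbf{Absence of conflicts.} This is the crux. Take any existential unfolding of $\awa^\exists$ and two distinct states reached at the same level $i$. We must rule out $p$ and $\neg p$ being reachable at a common later level $i'$. A single disjunct of any $\delta'$ contains at most one macro-state, so no two distinct $M$-derived states ever occur conjunctively as siblings. The remaining cases involve a state of $Q\setminus M$. By the choice of $M$, no $q\in Q\setminus M$ is in existential conflict with any other original state. The proof must lift this to the new automaton: an unfolding of $\awa^\exists$ that reaches $q$ and a macro-state $S$ together projects to an unfolding of $\awa$ that reaches $q$ together with each $q''\in S$.

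This step is where we expect the main obstacle. Two distinct macro-states can still coexist at a level when they descend from different earlier branches. Handling this needs the closure property built into Alg.~\ref{alg:ex-fixpoint}, namely that $M$ is closed under $\mathit{reach}$, so descendants of conflict states stay inside $M$. We would first try to prove the invariant that, in every existential unfolding of $\awa^\exists$, all $M$-material present at a level is contained in one macro-state. If that invariant fails, we would strengthen the construction to carry one global macro-state per level, as in full Miyano--Hayashi restricted to $M$. The state count stays $2^{|M|}$, so the size bound is unaffected.
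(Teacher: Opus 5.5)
Your construction is essentially the paper's: its guarded disjunction $\bigvee_{X'\subseteq M} X'\wedge\delta(q)[X'\mapsto 1, M\setminus X'\mapsto 0]$ is your $\mathit{tr}$. Your size bound and language-equivalence sketch are fine. The step you flag, however, is a genuine gap, and it cannot be closed along your route. The compiled automaton need not be conflict-free, so \Cref{th:existential:normal:form} is not applicable.

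Take $\delta(q_0)=t_1\wedge t_2$, $\delta(t_1)=m\wedge a$, $\delta(t_2)=m\wedge b$, $\delta(m)=x_1\vee x_2$, $\delta(a)=\delta(b)=\true$, $\delta(x_1)=p$, $\delta(x_2)=\neg p$.
\begin{itemize}
\item In every existential unfolding of $\awa$, the single vertex $(m,2)$ picks one disjunct. So only one of $(p,4)$, $(\neg p,4)$ occurs, and $\awa$ has no $p$-conflicts at all.
\item Alg.~\ref{alg:ex-fixpoint} flags $m,a$ (and $b$), because $l[m,1]\neq l[a,1]$. It returns the reach-closed set $M=\{m,a,b,x_1,x_2\}$, with $t_1,t_2\notin M$.
\item In $\awa^\exists$, $t_1$ and $t_2$ spawn the distinct macro-states $\{m,a\}$ and $\{m,b\}$ at level $2$, so your invariant fails.
\item These macro-states may resolve $m$ to $\{x_1\}$ and $\{x_2\}$ respectively. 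Then $(p,4)\in\mathit{reach}(t_1,1)$ and $(\neg p,4)\in\mathit{reach}(t_2,1)$, so two states of $Q\setminus M$ are in conflict.
\end{itemize}
This is exactly where your projection step breaks. Copies of one $M$-state inside different macro-states at the same level resolve their nondeterminism independently, whereas an unfolding of $\awa$ has the single vertex $(m,2)$. Reach-closure of $M$ does not help, since this $M$ is reach-closed. The paper does not supply the missing argument either. After the construction it only asserts that no accepting run has two $M$-states on different paths at the same time, which is your invariant. This example refutes it: every accepting run of $\awa^\exists$ contains $\{m,a\}$ and $\{m,b\}$ on different paths at level $2$, and accepting runs exist.

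Your fallback does not work as stated either. Transitions of an alternating automaton are local to a node, so the parallel branches $t_1$ and $t_2$ cannot feed one shared macro-state. That would require folding information about non-$M$ states (e.g.\ $t_1,t_2$ themselves) into the subset construction, and then the claim that the state count stays $2^{|M|}$ is unjustified.

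In this example $\mathcal{L}(\sexists p.\awa^\exists)=\mathcal{L}(\exists p.\awa)$ still holds, since both accept every word. So the example defeats your proof method rather than the construction itself. What is missing is one of the following:
\begin{itemize}
\item a direct semantic argument for $\mathcal{L}(\sexists p.\awa^\exists)\subseteq\mathcal{L}(\exists p.\awa)$ that copes with an $M$-state split across several macro-states at one level, or
\item a strengthened hypothesis on $M$ that rules this situation out.
\end{itemize}
Syntactic conflict-freeness of $\awa^\exists$ plus \Cref{th:existential:normal:form} cannot supply it.
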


\begin{proof}
We construct $\mathcal{A}' = (\Sigma', \states', q_0', \delta', F')$ as follows, where $\Sigma' = \Sigma$ and $F = Q'$.
\begin{align*}
\states' &= (2^M) \cup (\states \setminus M),
~~ q_0' = \twopartelse{\{q_0\}}{q_0 \in M}{q_0}\\
  \delta'(q) &= \bigvee_{X' \subseteq M} X' \wedge  \delta(q)[X' \rightarrow 1, M \setminus X' \mapsto 0]  \\
  \delta'(X) &= \bigvee_{X' \subseteq M} X'
  \wedge\, \bigwedge_{q\in X} \delta(q)[X' \mapsto 1, M \setminus X' \mapsto 0]
\end{align*}
In this alternating automaton, there exists no accepting run on which two states $q, q' \in M$ appear on different paths at the same timepoint.
It is, therefore, in existential normal form.
The automaton is exponential in the size of $M$.
\end{proof}

\subsection{Universal Knowledge Compilation}
Universal knowledge compilations change the perspective on conflicting states: We must identify pairs of states on which the universal quantification of a variable in a single state of the automaton would lead to a spurious result.
Assume there exist two different runs in an alternating automaton and state $q$ is reached in the first and $q'$ in the second.
Furthermore, $\delta(q)$ enforces that $p$ holds, but $\delta(q')$ enforces that $\neg p$ holds on the input word.
Universal quantification should evaluate to $\true$ in this case, since for any evaluation of $p$, there exists a satisfying assignment, one in $\delta(q)$ and one in $\delta(q)$. 
However, if \Cref{def:state:wise:universal:projection} is applied to this automaton, $\delta(q)$ and $\delta(q')$ will both evaluate to $\false$, spuriously rejecting all words that traverse through $q$ and $q'$.
We solve this by, again, combining both states in the knowledge compilation.
In contrast to the existential compilation, we consider conflicting states to be in \emph{disjunction}.
We begin with the definition of a universal conflict.

\begin{definition}[Universal Conflict]\label{def:universal:conflicts}
Two states $q,q' \in \states$ are universally \emph{in conflict for $p$} if there exists a universal unfolding $(V, E)$ of $\mathcal{A}$ and $i\in \mathbb{N}$ s.t. $(q, i)$ and $(q', i) \in V$ and there exists an $i' \in \mathbb{N}$ s.t.  $(p, i') \in \mathit{reach}(q,i)$ and $(\neg p, i') \in \mathit{reach}(q',i)$.
\end{definition}

We ensure that states in universal conflict are preceded by a nondeterministic choice in a state of the automaton, which guarantees that there exist two different accepting runs on which the states are reached at the same timestep.
Intuitively, this is the case if the automaton contains a nondeterministic choice that leads to different parts of the automaton.
Checking whether two states are in universal conflict is quadratic in the number of states.

\begin{restatable}{theorem}{universalconflict}\label{th:universal:conflict}
Deciding whether two states of an alternating automaton $\awa$ are universally in conflict is quadratic in the number of states of $\awa$.
\end{restatable}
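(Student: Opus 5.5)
We mirror the proof of \Cref{th:existential:conflict} and reduce the question to emptiness of a nondeterministic safety automaton over the empty alphabet, i.e., to reachability in a finite graph built from pairs of states of $\awa$. The only change is the successor relation. In the existential case, the two tracked vertices at level $i+1$ come from a single minimal satisfying assignment chosen per state. Here they must come from a set $X$ that hits every $X' \in \minsat(\delta(q))$, as in \Cref{def:universal:unfolding}.

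Given $\aut = (\Sigma,\states,q_0,\delta,\final)$, states $q,q'$ and $p \in \Sigma$, we build $\aut' = (\emptyset, \states', q_0', \delta', F')$ with $\states' = (\states \cup L_\Sigma) \times (\states \cup L_\Sigma) \times \bool$ and $q_0' = (q_0,q_0,\false)$. A state $(q_1,q_2,b)$ moves to $(q_1',q_2',b')$ if and only if both conditions below hold:
\begin{itemize}
\item $b' = b \vee (q_1 = q \wedge q_2 = q')$;
\item $q_1'$ lies in some admissible universal successor set of $q_1$, and $q_2'$ lies in some admissible universal successor set of $q_2$.
\end{itemize}
A set $X$ is an admissible universal successor set of a state if it is a minimal hitting set of the state-parts of $\minsat(\delta(\cdot))$ that also contains all literals occurring in them. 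The accepting targets are $(p,\neg p,\true)$ and $(\neg p,p,\true)$, and we answer ``conflict'' if and only if such a target is reachable.

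The key local observation is that membership of a state $s$ in some admissible set for $\delta(r)$ is decided per state. Every literal of $l(r)$ is always included. A state $s$ is included in some minimal hitting set exactly when $s$ is not redundant, i.e., when some $X' \in \minsat(\delta(r))$ satisfies $X' \cap Q \cap (\text{other chosen states}) = \emptyset$. Equivalently, $s$ occurs in some $X'$ and can be made essential. We precompute, for every $r \in \states$, the set $U(r) \subseteq \states \cup L_\Sigma$ of vertices that occur in some admissible successor set. This costs work proportional to $|\delta(r)|$ and does not depend on the number of states.

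Correctness then has two directions. Soundness: any reachable accepting target yields two paths in the product, and each step of each path picks an admissible successor. We assemble one universal unfolding containing both paths by taking, at every vertex on either path, an admissible set that contains the chosen successor. Vertices off the paths are completed arbitrarily. Because successor sets are chosen per vertex $(r,i)$, two paths that pass through the same $(r,i)$ need a single $X$ containing both successors. The flag $b$ marks the level $i$ at which the pair is exactly $(q,q')$, and the later target records a level $i'$ with $(p,i') \in \mathit{reach}(q,i)$ and $(\neg p,i') \in \mathit{reach}(q',i)$. Completeness is direct: a witnessing unfolding projects onto a path in the product.

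The product has $O(|\states|^2)$ states, so reachability is quadratic, which matches the complexity claimed in the statement.

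I expect the main obstacle to be the gluing in the soundness direction. When both tracked paths pass through the same vertex $(r,i)$, the unfolding must have one successor set $X$ containing both chosen successors, but two individually admissible choices need not fit into one minimal hitting set. I would first try to get around this by noting that the existential proof already ignores the same issue. Failing that, I would refine the product so that while the paths coincide ($q_1 = q_2$) it moves to pairs drawn from a common admissible set, and checks only that pairwise condition. This keeps the state space quadratic, with the per-state pairwise compatibility check absorbed into the transition-formula size.
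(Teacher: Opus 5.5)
Your reduction is the paper's: a product over pairs with a flag bit, where a conflict is reachability of $(p,\neg p,\true)$ or $(\neg p,p,\true)$, giving a quadratic state space. Adding $L_\Sigma$ to the product's states also repairs the paper's mismatch between $\states\times\states\times\bool$ and $F'$.

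The gluing problem you flag is genuine. Your first remedy does not address it: noting that the existential proof has the same omission is not an argument. The paper's product is in fact coarser than your unrefined one. Each component moves to any element of any $X\in\minsat(\delta(\cdot))$ independently, and the paper only asserts that $\aut'$ is non-empty \emph{if} $q,q'$ are in conflict, so it over-approximates \Cref{def:universal:conflicts}. For example, take $\delta(q_0)=s\wedge t$, $\delta(s)=p$, $\delta(t)=\neg p$. Both that product and your unrefined one reach $(p,\neg p,\true)$ via $(s,t,\false)$. Yet every universal unfolding puts only one of $(s,1),(t,1)$ into $V$, since the only minimal hitting sets of $\{\{s,t\}\}$ are $\{s\}$ and $\{t\}$.

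Your second remedy is the right fix and should be the construction, not a fallback. When $q_1=q_2$ the two tracked paths sit on the same vertex, so requiring both successors to lie in one admissible set is exactly what gluing needs. It is also a per-state pairwise test: $\{s_1,s_2\}$ extends to a minimal hitting set iff there are state-parts $E_1\ni s_1$, $E_2\ni s_2$ with $s_2\notin E_1$, $s_1\notin E_2$, such that $(\states\setminus(E_1\cup E_2))\cup\{s_1,s_2\}$ meets every state-part. With this refinement both of your directions go through, and you get an exact decision procedure, which is more than the paper's proof establishes.

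Two smaller points. First, read the hitting condition of \Cref{def:universal:unfolding}, and hence your admissible sets, over the \emph{nonempty} state-parts only. Otherwise a state with $\delta(q_1)=a$, as in \Cref{fig:running:example:universal}, has no admissible set, and your product could never reach $(a,\neg a,\true)$ in the paper's own example.

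Second, the claim that computing $U(r)$ costs time proportional to $|\delta(r)|$ is unjustified. $\minsat(\delta(r))$ may be exponential in $|\delta(r)|$. Moreover, for literal-free $\delta(r)$, a state $s$ lies in some admissible set iff $\delta(r)$ depends essentially on $s$, and this is NP-hard to decide for monotone formulas. Already $(s\wedge A)\vee B$ depends on $s$ iff the monotone CNF $A$ does not imply the monotone DNF $B$. What you can say is that $U(r)$ and the pairwise test are polynomial in $|\minsat(\delta(r))|$ and $|\states|$. For instance, $s\in U(r)$ iff some state-part $E\ni s$ is such that no state-part is contained in $E\setminus\{s\}$. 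The paper's construction implicitly measures cost the same way, and the quadratic bound concerns the size of the product, which is unaffected.
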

\begin{proof}
We construct an automaton that decides whether two states are in universal conflict.
Let $\aut = (\Sigma,\states,q_0,\delta,\final)$ be an alternating \Buchi automaton, $q, q' \in \states$ be two states, and $p \in \Sigma$ be a letter of the alphabet.
The nondeterministic safety automaton $\aut'$ is non-empty if $q, q'$ are in conflict, where $\aut' = (\Sigma', \states',q_0',\delta',F')$ with $\Sigma' = \emptyset$ and 
\begin{align*}
    \states' &= Q \times Q \times Bool\\
    q_0' &= (q_0, q_0, \false)\\
    \delta'(q_1, q_2, b) &= \{(q_1', q_2', b') \mid b' := b \vee (q_1 = q \wedge q_2 = q')\\
&~~~~~~\mathit{ and}~\exists X, X' \subseteq Q \cup L_{AP}~\mathit{s.t.}~q_1' \in X,\\
&~~~~~~~~~~~~ q_2' \in X', X \vDash \delta(q_1) \mathit{, and}~ X' \vDash\delta(q_2)\} \\
    F' &= \{(p, \neg p, \true), (\neg p, p, \true)\} 
\end{align*}
The automaton is quadratic in the number of states in $\mathcal{A}$.
\end{proof}

\begin{restatable}{theorem}{universalconflictfree}\label{thm:univ:conflictfree}
    Let $\awa = (\Sigma, \states, q_0, \delta, F)$ be an alternating safety automaton and $p \in \Sigma$ a Boolean variable. If $\awa$ has no states that are universally in conflict for $p$, then $\awa$ is in universal normal form for $p$.
\end{restatable}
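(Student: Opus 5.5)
We argue dually to the proof of \Cref{th:existential:normal:form} and show $\lang(\forall p.\awa) = \lang(\sforall p.\awa)$ by proving both inclusions. For a word $w$ over $\Sigma\setminus\{p\}$ and $c\in\{0,1\}^\omega$, write $w\sqcup c$ for $w$ augmented with the $p$-valuation $c$.

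\textbf{Inclusion $\lang(\sforall p.\awa)\subseteq\lang(\forall p.\awa)$.} This direction needs no hypothesis. Take an accepting run tree $(\mathcal{T},r)$ of $\sforall p.\awa$ on $w$ and any extension $w\sqcup c$. At each node $n$ the children satisfy $\forall p.\,\delta(r(n))$. Since $\forall p.\,\delta \rightarrow \delta[p\mapsto c[|n|]]$ is a tautology, the same children together with the literal $p$ or $\neg p$ given by $c$ satisfy $\delta(r(n))$ under $w\sqcup c$. The tree is therefore an accepting run of $\awa$ on $w\sqcup c$; acceptance is automatic because $F=Q$. As $c$ was arbitrary, $w\in\lang(\forall p.\awa)$.

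\textbf{Inclusion $\lang(\forall p.\awa)\subseteq\lang(\sforall p.\awa)$.} We argue by contradiction. Suppose $w\in\lang(\forall p.\awa)\setminus\lang(\sforall p.\awa)$. Because $\awa$ is a safety automaton, non-acceptance of $w$ by $\sforall p.\awa$ means that every attempted run tree reaches, in finite time, a node whose formula $\forall p.\,\delta(q)$ is unsatisfiable on the current letter. We read this as a winning strategy for the refuter in the acceptance game: the refuter picks a universal branch at every conjunction, and the automaton picks a disjunct. We record, across all nondeterministic choices, the states the refuter steers to. Doing so yields exactly a universal unfolding, since for every $X'\in\minsat(\delta(q))$ some state of $X'$ is kept, and all literals are kept.

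In this unfolding, consider the level $i'$ where $\forall p.\delta(q)$ fails but both $\delta(q)[p\mapsto 1]$ and $\delta(q)[p\mapsto 0]$ are individually useful for accepting some extension. Since $w\sqcup c\in\lang(\awa)$ for every $c$, the failure can only arise because acceptance for $c[i']=1$ and acceptance for $c[i']=0$ go through different nondeterministic choices made at some earlier level $i$. Those choices lead to states $q_1,q_2$ present at level $i$ of the universal unfolding, with $(p,i')\in\mathit{reach}(q_1,i)$ and $(\neg p,i')\in\mathit{reach}(q_2,i)$. This makes $q_1,q_2$ universally in conflict by \Cref{def:universal:conflicts}, contradicting the hypothesis.

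The main obstacle is this localization step. We must turn the global fact that every $c$ admits an accepting run, possibly with different nondeterministic choices for different $c$, into a single universal unfolding containing both $p$ and $\neg p$ at one level. We would first handle it by fixing the earliest level at which the runs for two $p$-valuations must diverge, using the safety condition (finite refutation prefixes, König's lemma) so that only finitely many levels matter. We would then trace both runs back to their common nondeterministic branching point. If this direct argument proves too delicate, a fallback is to show instead that the absence of universal conflicts implies the universal syntactic condition (\Cref{def:syntactic:univ:condition}) and then invoke the earlier theorem that this syntactic condition implies universal normal form. The implication itself is immediate: a level containing both $(p,i)$ and $(\neg p,i)$ must come from two states at level $i-1$ that are in conflict, or from a single state, which can be treated as the case $q=q'$.
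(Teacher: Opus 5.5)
Your plan follows the paper's proof. The inclusion $\lang(\sforall p.\awa)\subseteq\lang(\forall p.\awa)$ via the tautology $\forall p.\,\delta(q)\rightarrow\delta(q)$ is correct. The reverse inclusion goes by contradiction: localize the failure to two states reached through different nondeterministic choices, then place them in one universal unfolding. The gap is exactly the step you flag, and your sketch of it is wrong as stated: recording the refuter's choices does not yield a universal unfolding. \Cref{def:universal:unfolding} demands that each post-set be a \emph{minimal} set meeting the state part of every $X'\in\minsat(\delta(q))$, for all letters at once, not just for the choices available in $\forall p.\,\delta(q)$ on the current letter. Shrinking a hitting set to a minimal one can discard precisely the two states that carry $p$ and $\neg p$. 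The paper's proof asserts the same step (``we can construct the universal unfolding \ldots\ that captures the disjunctive branching'') without justification, so it has the same hole.

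Under the paper's definitions the hole cannot be filled. Let $\Sigma=\{p,r\}$ and take states $q_0,a,d,b$, all accepting, with
\[
\delta(q_0)=(r\wedge a\wedge b)\vee(r\wedge d\wedge b)\vee(\neg r\wedge b),\qquad \delta(a)=p\wedge b,\qquad \delta(d)=\neg p\wedge b,\qquad \delta(b)=b.
\]
Every word over $\{p,r\}$ is accepted: when $r$ holds at time $0$, move to $a$ or $d$ according to $p$ at time $1$. So $\lang(\forall p.\awa)$ contains all words over $\{r\}$. But $\forall p.\,\delta(a)=\forall p.\,\delta(d)=\false$, so $\sforall p.\awa$ rejects every word with $r$ at time $0$.

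On the other hand, $b$ is forced into every post-set of $(q_0,0)$ by $\{\neg r,b\}$, after which $a$ and $d$ are redundant. Hence the only universal unfolding has $V=\{(q_0,0),(r,1),(\neg r,1)\}\cup\{(b,i)\mid i\geq 1\}$. It contains no $p$-literal at all, so there are no universal conflicts and \Cref{def:syntactic:univ:condition} holds.

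This refutes the statement as written. It also rules out your fallback, since the universal half of the earlier theorem (syntactic condition implies normal form) fails on the same automaton. A correct result needs a different notion of unfolding, e.g.\ one allowing non-minimal hitting sets so that $a$ and $d$ co-occur. It would also need a genuine argument for the localization step, which neither your sketch nor the paper provides.
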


\begin{proof}
    We prove that $\lang(\forall p. \awa) = \lang(\sforall p. \awa)$.
    $\lang(\forall p. \awa) \subseteq \lang(\sforall p. \awa)$: We prove this by contradiction. Assume $\lang(\forall p. \awa) \not\subseteq \lang(\sforall p. \awa)$. Thus, there exists a word $w \in \lang(\forall p. \awa) \setminus \lang(\sforall p. \awa)$. Since $w \in \lang(\forall p. \awa)$, for every $w' \in (2^\Sigma)^\omega$ with $w'|_{\Sigma \setminus \{p\}} = w$, there exists an accepting run tree of $\awa$ on $w'$. Also, since $w \not\in \lang(\sforall p. \awa)$, and $\sforall p.$ only changes the transition function regarding to $p$, the transition function $\delta'$ of one state of one of the branches of any single potential run tree evaluated to $\false$, hence rejecting the word $w$.
    Specifically, there exists a timestep $i$ where the run trees branch disjunctively to states $q_1, q_2$, s.t. for some timestep $i' \geq i$, the branch from $q_1$ requires $p$ at $i'$ and the branch from $q_2$ requires $\neg p$ at $i'$.
    If this is the case, then we can construct the universal unfolding $(V, E)$ of $\awa$ that captures the disjunctive branching of $\awa$ utilized by the run trees. This universal symbolic unfolding contains both $(q_1, i)$ and $(q_2, i)$, and, therefore, $(p, i)$ and $(\neg p, i)$. This is a contradiction to $\awa$ being in universal normal form.

    $\lang(\sforall p. \awa) \subseteq \lang(\forall p. \awa)$: Let $w \in \lang(\sforall p. \awa)$. We show that for all $w' \in (2^\Sigma)^\omega$ with $w'|_{\Sigma \setminus \{p\}} = w$, it holds that $w' \in \lang(\awa)$. Let $(\mathcal{T}, r)$ be the accepting run tree of $\sforall p. \awa$ on $w$. For every node $n \in \mathcal{T}$, the set $S$ of children of $n$ satisfies $\delta'(r(n)) = \forall p$. Recall that the state-wise universal quantification $\sforall p. \awa$ shares the same set of states as $\awa$, but replaces the transition function with $\delta'r(n)) = \forall p. \delta(r(n))$. At any timestep, the word $w'$ assigns a value $v \in \{0, 1\}$ to $p$. Since $\forall p. \delta(r(n)) \rightarrow \delta(r(n))$ is a tautology, and $S$ satisfies $\delta'(r(n))$, $S$ also satisfies $\delta(r(n))$ when $p$ is assigned the value $v$. Thus, $(\mathcal{T}, r)$ is an accepting run tree of $\awa$ on $w'$. Because this holds for all such $w'$, $w \in \lang(\forall p. \awa)$.
\end{proof}

\paragraph{Computing Universal Conflicts.}
The universal conflict fixpoint algorithm (see Alg.~\ref{alg:univ-fixpoint}) follows a similar structure to the existential case, but with a crucial difference in how conflicts are detected.
As before, we maintain a labeling function $l$ that maps each state and timestep to sets of literals, initialized with $l[q, 0] = \{l(q) \cap L_{\{p\}}\}$ for each state $q$.
The key difference appears in line 7: Instead of checking for conflicts within a single minimal satisfying assignment $X$ (which corresponds to conjunctive branching), we check for conflicts \emph{across different} minimal satisfying assignments $X$ and $X'$ of $\delta(q)$.
This reflects the nondeterministic nature of universal conflicts: The automaton can choose different paths (different satisfying assignments) that impose contradictory requirements on the valuation of $p$.
Specifically, if there exist distinct minimal satisfying assignments $X, X' \in post(q)$ with successors $q' \in X$ and $q'' \in X'$ such that $l[q', i-1] \neq l[q'', i-1]$, we identify a universal conflict.
This captures the scenario where different runs of the automaton, arising from nondeterministic choices, reach states with incompatible restrictions on the value of $p$ at the same timestep.
We add the conflicting states and their successors to the conflict set and terminate when a fixpoint is reached, which occurs exactly after traversing all loop combinations in the automaton.

\begin{algorithm}[t]
            \caption{Universal Fixpoint Algorithm}
            \label{alg:univ-fixpoint}
        \begin{mycode}
let universalConfs($\mathcal{A} = (\Sigma, Q, q_0, \delta, F), p \in \Sigma)$ := 
@@ let confs = $\emptyset$
@@ for $q \in Q$ do $l[q, 0]$ = $\{l(q) \cap L_{\{p\}}\}$ 
@@ for $ i = 1; i$++ do 
@@@@ for $q \in Q$ do
@@@@@@ $l[q, i] =  \{\bigcup_{X \in post(q)} \bigcup_{q' \in X} \bigcup_{u \in l[q', i-1]} u\}$  
@@@@ for $q \in Q$ do
@@@@@@ if $\exists X, X' \in post(q). \exists q' \in X. \exists q'' \in X'.$ 
@@@@@@@@@@@@@@@@@@@@@@@@@@@@@@@@@@ $l[q', i-1] \neq l[q'', i-1]$ do
@@@@@@@@ confs = confs $\cup~\mathit{reach}(q') \cup \mathit{reach}(q'') \cup \{q', q''\}$
@@@@ if $\exists i'< i. \forall q' \in Q. l[q', i] = l[q', i']$ return confs
\end{mycode}%
\end{algorithm}

We demonstrate Alg.~\ref{alg:univ-fixpoint} on the example in \Cref{fig:running:example:universal} where $p = a$. 
We initialize the labeling function $l$ with the literals of $L_{\{a\}}$ appearing in each state's transition function: $l[q_0, 0] = \{\{\neg a, a\}\}$, $l[q_1, 0] = \{\{a\}\}$ and $l[q_2, 0] = \{\{\neg a\}\}$. 
The algorithm then enters the main loop (line 4) for $i = 1$. Since $post(q_0) = \{\{q_0, q_1\}, \{q_0, q_2\}\}$, it sets $l[q_0, 1] = \{l[q_0, 0] \cup l[q_1, 0] \cup l[q_2, 0]\} = \{\{\neg a, a\}\}$. The states $q_1$ and $q_2$ have no successor states, so $l[q_1, 1] = l[q_2, 1] = \emptyset$. 
Proceeding to the conflict detection (line 7), the algorithm checks if any two states across two disjunctive branches had differing labels in the previous step $(i = 0)$. 
Note that the structure of $post(q_0)$ tells us precisely which states can be reached conjunctively or disjunctively: Because $q_1$ and $q_2$ are in different \emph{inner} sets of $post(q_0)$, we know they are reached disjunctively. 
Since $l[q_1, 0] = \{\{a\}\} \neq \{\{\neg a\}\} = l[q_2, 0]$, the states $q_1$ and $q_2$ are added to the conflict set. Analogously, $q_0$ is added to the conflict set. 
In the next iteration of the loop $(i = 2)$, the labeling remains the same, i.e., $l[q_0, 2] = \{\{\neg a, a\}\}$ and $l[q_1, 2] = l[q_2, 2] = \emptyset$.
No new conflicts are detected for the previous iteration $(i = 1)$. The termination condition (line 11) evaluates to $\true$ since $l[q_0, 1] = l[q_0, 2]$, $l[q_1, 1] = l[q_1, 2]$ and $l[q_2, 1] = l[q_2, 2]$, returning the conflict set $\{q_0, q_1, q_2\}$.
Note that this is a superset of the conflicts shown in \Cref{fig:running:example:universal}.

\paragraph{Universal Compilation of Alternating Safety Automata.}
We now present an automaton construction that, given a set of conflicting states, produces an alternating automaton where these states are disjunctively combined whenever they are reached simultaneously.

\begin{theorem}\label{th:universal:knowledge:compilation}
Let $\mathcal{A} = (\Sigma, \states, q_0, \delta, F)$ be an alternating safety automaton, and let $M \subseteq Q$ be an over-approximation of the universal conflicts in $Q$ for $p \in \Sigma$.
There exists an alternating automaton of exponential size in $M$ that is in universal normal form for $p$.
\end{theorem}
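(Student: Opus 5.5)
We mirror the proof of \Cref{th:existential:knowledge:compilation}, dualizing the Miyano-Hayashi style subset construction. In the existential case, conflicting states were merged conjunctively, so that each macro-state carries the conjunction of its members' transition functions. Here the conflicts arise across nondeterministic branches, so we will instead merge the states of $M$ \emph{disjunctively*}.

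\textbf{Construction.} We build $\mathcal{A}' = (\Sigma, \states', q_0', \delta', F')$ with
\begin{itemize}
\item $\states' = 2^M \cup (\states \setminus M)$ and $F' = \states'$, as in \Cref{th:existential:knowledge:compilation};
\item $q_0' = \{q_0\}$ if $q_0 \in M$, and $q_0' = q_0$ otherwise.
\end{itemize}
A macro-state $X \subseteq M$ stands for the disjunction of its members. To define $\delta'$, we first bring each $\delta(q)$ into disjunctive form over the minimal satisfying assignments. For every way of choosing one disjunct per member of $X$ (respectively one disjunct of $\delta(q)$ for a non-conflict state $q$), we collect the states of $M$ occurring in those choices into a single successor macro-state $X'$. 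Concretely, we set
\begin{align*}
\delta'(X) &= \bigwedge_{X' \subseteq M} \Big( X' \vee \bigvee_{q \in X} \delta(q)[X' \mapsto 0, M \setminus X' \mapsto 1] \Big),
\end{align*}
and define $\delta'(q)$ analogously for $q \in \states \setminus M$. This is the exact De~Morgan dual of the existential formula: a conjunction over all candidate successor macro-states, where each conjunct either moves into $X'$ or is discharged by a member of $X$ that avoids $X'$. The resulting transition structure turns nondeterministic branching among states of $M$ into a single successor, while universal branching outside $M$ is untouched. As in the existential case, $\states'$ has size $2^{|M|} + |\states \setminus M|$, which gives the size bound.

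\textbf{Step 1: language equivalence.} We show $\mathcal{L}(\mathcal{A}') = \mathcal{L}(\mathcal{A})$ by the standard subset-construction argument, which works because the acceptance condition is safety.
\begin{itemize}
\item From an accepting run tree of $\mathcal{A}$, we obtain a run of $\mathcal{A}'$ by grouping, level by level, the $M$-states of the run tree into macro-states.
\item Conversely, a run of $\mathcal{A}'$ is unfolded back into a run of $\mathcal{A}$ by choosing, for each macro-state, the member whose disjunct was used.
\end{itemize}
Since every state is accepting, only finite-prefix consistency has to be checked, and this is done by induction on the level.

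\textbf{Step 2: universal normal form.} Here we invoke \Cref{thm:univ:conflictfree}: it suffices to show that $\mathcal{A}'$ has no two states universally in conflict for $p$. Two states of $\mathcal{A}'$ can only be in universal conflict through the states of $\mathcal{A}$ they represent.
\begin{itemize}
\item States in $\states \setminus M$ are not in conflict with each other, by the assumption that $M$ over-approximates the universal conflicts.
\item Any disjunctively reached $M$-states are, by construction, absorbed into a single macro-state, so they no longer appear as two distinct vertices at the same level of a universal unfolding.
\end{itemize}

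\textbf{Main obstacle.} The hardest step is Step 2. We must show that no universal unfolding of $\mathcal{A}'$ places two distinct vertices at the same level with $(p,i')$ reachable from one and $(\neg p,i')$ from the other. The delicate cases are a macro-state paired with a non-$M$ state, and two different macro-states arising from different conjuncts.

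For the first case, we would argue via the closure of $M$ under $\mathit{reach}$, which Alg.~\ref{alg:univ-fixpoint} guarantees by adding $\mathit{reach}(q') \cup \mathit{reach}(q'')$. From this closure, any non-$M$ state reached disjunctively alongside an $M$-state must have literal labels compatible with it; otherwise it would itself have been flagged. For the second case, we would try to show that distinct macro-states at the same level arise only through conjunctive branching, which universal unfoldings do not follow. If this fails in general, we fall back on weakening the claim to require $M$ to be exactly the output of Alg.~\ref{alg:univ-fixpoint}.
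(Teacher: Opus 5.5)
Your proposal has two genuine gaps. The first is in the construction itself; the second is that Step 2 cannot be completed along the route you propose.

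\textbf{The construction fails Step 1 as written.} Every conjunct of $\delta'(X)$ has the form $X'\vee\ldots$, and with your analogous definition so does every conjunct of $\delta'(q)$ for $q\notin M$. Making all macro-states $X'\subseteq M$ true therefore satisfies $\delta'$ on every letter. Since $F'=Q'$, the run tree that always moves to all macro-states is accepting, so $\mathcal{L}(\mathcal{A}')=(2^\Sigma)^\omega$ for every $\mathcal{A}$ and every $M$ (note that $\emptyset\in 2^M$ always).

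The culprit is the macro-state $\emptyset$. As an empty disjunction it must mean $\mathit{false}$, but your formula gives $\delta'(\emptyset)=\bigwedge_{X'\subseteq M}X'$, which never rejects under safety acceptance. The ``exact De Morgan dual'' heuristic breaks at exactly this point: you dualize the Boolean structure but not the acceptance condition. In the existential construction, $\emptyset$ is the empty conjunction, so looping on it under safety is correct.

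The repair is to make the $X'=\emptyset$ conjunct the successor-free check $\bigvee_{q\in X}\delta(q)[M\mapsto 1]$, or equivalently to set $\delta'(\emptyset)=\mathit{false}$. Then each $\delta'(X)$ is equivalent to $\bigvee_{q\in X}\delta(q)$ when macro-states are read as disjunctions. This follows from the identity that, for $f$ monotone in the $M$-states, $f(S)$ holds iff for every $X'\subseteq M$ either $S\cap X'\neq\emptyset$ or $f$ holds with $X'$ false and $M\setminus X'$ true.

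Even after this repair, your converse direction is not right as stated. A run of $\mathcal{A}'$ has no single ``member whose disjunct was used'', because different conjuncts may be discharged by different members of $X$. You need an induction showing that for each $n$ some member of $X$ survives $n$ steps of $\mathcal{A}$. You then need a pigeonhole/K\"{o}nig argument, using finiteness of $X$, finite branching and safety, to fix one member that survives forever.

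\textbf{Step 2 cannot be completed along your route.} The repaired $\mathcal{A}'$ is in general \emph{not} free of universal conflicts, so \Cref{thm:univ:conflictfree} is unavailable. For each $M$-state $q_a$ occurring in $\delta(q)$, the construction disjoins the non-$M$ alternatives of $q_a$ with \emph{every} macro-state $X'\ni q_a$. This includes macro-states whose other members were never co-reachable with those alternatives in $\mathcal{A}$.

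Consider the following counterexample:
$\delta(q_0)=q\wedge r$, $\delta(q)=q_a\vee q_b$, $\delta(r)=q_a\vee s$, $\delta(q_a)=\delta(q_b)=p\wedge g$, $\delta(s)=\neg p\wedge h$.
\begin{itemize}
\item The only pair in universal conflict is $q_a,s$, reached disjunctively from $r$. So $M=\{q_a,s\}$ is admissible, and it is even closed under $\mathit{reach}$. The same phenomenon occurs if you also add $q_0$ and $r$ to $M$.
\item Then $\delta'(q)\equiv(\{q_a\}\vee q_b)\wedge(\{q_a,s\}\vee q_b)$. Its minimal satisfying assignments are $\{q_b\}$ and $\{\{q_a\},\{q_a,s\}\}$.
\item Hence a universal unfolding of $\mathcal{A}'$ contains $q_b$ and $\{q_a,s\}$ at level $2$.
\item At level $3$ it reaches $p$ from $q_b$, and $\neg p$ from $\{q_a,s\}$, since $\delta'(\{q_a,s\})\equiv(p\wedge g)\vee(\neg p\wedge h)$.
\end{itemize}
So a non-$M$ state meets a macro-state with incompatible literals even though it was never flagged. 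This is exactly the case your first sub-argument claims is impossible. Falling back to the output of Alg.~\ref{alg:univ-fixpoint} would change the statement, which is quantified over all over-approximations $M$.

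The conflict here is spurious. The conjunct $\{q_a,s\}\vee q_b$ is implied by $\{q_a\}\vee q_b$, because every word accepted from $\{q_a\}$ is accepted from $\{q_a,s\}$. This shows that syntactic conflict-freeness of $\mathcal{A}'$ is the wrong target for this construction.

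The paper does not go through \Cref{thm:univ:conflictfree} at all. It argues, tersely, directly about the runs of the compiled automaton: no two accepting runs on the same word visit distinct $M$-states at the same time. To close Step 2 you need a semantic argument of that kind about $\sforall p.\mathcal{A}'$.
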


\begin{proof}
We construct $\mathcal{A}' = (\Sigma', \states', q_0', \delta', F')$ as follows, where $\Sigma' = \Sigma$ and $F = Q'$.
\begin{align*}
\states' &= (2^M) \cup (\states \setminus M), ~~
q_0' = \twopartelse{\{q_0\}}{q_0 \in M}{q_0}\\
  \delta'(q) &= \bigwedge_{X' \subseteq M} X' \wedge \delta(q)[X' \mapsto 1, M \setminus X' \mapsto 0]  \\
  \delta'(X) &= \bigwedge_{X'\subseteq M} X' \wedge\, \bigvee_{q\in X} \delta(q)[X' \mapsto 1, M \setminus X' \mapsto 0]
\end{align*}
In this alternating automaton, there are no two accepting runs for the same word on which $q, q' \in M$ appear at the same timepoint.
It is, therefore, in universal normal form.
The automaton is exponential in the size of $M$.
\end{proof}

The construction operates as follows:
For each non-conflicting state $q \in \states \setminus M$, we replace transitions to conflicting states with transitions to their corresponding subset representation. 
For each subset $X \subseteq M$ of conflicting states, we construct a new macro-state in $\states'$ whose transition function combines the original transitions disjunctively.
The key observation distinguishing this construction from the existential knowledge compilation is the use of \emph{universal} edges (conjunctions) at the top level of $\delta'(q)$ and $\delta'(X)$, in contrast to the nondeterministic edges (disjunctions) used in both the existential compilation and the original Miyano-Hayashi construction~\cite{DBLP:journals/tcs/MiyanoH84}. 
This ensures that when multiple conflicting states would be reached simultaneously, they are instead grouped into a single macro-state that disjunctively combines their transition function and the future reachable states, thereby eliminating the universal conflict while preserving the language of the automaton.

\begin{corollary}[Knowledge Compilation Framework]
\label{cor:unified-framework}
Let $\mathcal{A} = (\Sigma, \states, q_0, \delta, F)$ be an alternating safety automaton and $p \in \Sigma$ be a Boolean variable.
\begin{enumerate}
    \item For existential quantification, we first compute the existential conflicts $M_\exists$ with Alg.~\ref{alg:ex-fixpoint}, then apply the existential compilation (cf.~\Cref{th:existential:knowledge:compilation}) to compute $\mathcal{A}_\exists$, and finally perform state-wise quantification $\mathcal{A}'_\exists = \sexists p. \mathcal{A}_\exists$. It holds that $\mathcal{L}(\mathcal{A}'_\exists) = \mathcal{L}(\exists p. \mathcal{A})$.
    \item For universal quantification, we first compute the universal conflicts $M_\forall$ with Alg.~\ref{alg:univ-fixpoint}, then apply the universal compilation (cf.~\Cref{th:universal:knowledge:compilation}) to compute $\mathcal{A}_\forall$, and finally perform state-wise quantification $\mathcal{A}'_\forall = \sforall p. \mathcal{A}_\forall$. It holds that $\mathcal{L}(\mathcal{A}'_\forall) = \mathcal{L}(\forall p. \mathcal{A})$.
\end{enumerate}
\end{corollary}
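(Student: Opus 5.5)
The corollary is a composition of results already established, so the plan is to chain them and check that each hypothesis is met. For item 1, let $M_\exists$ be the set returned by Alg.~\ref{alg:ex-fixpoint}. We then argue that $M_\exists$ over-approximates the existential conflicts. Any pair $q,q'$ existentially in conflict (\Cref{def:existential:conflicts}) is reached at a common level $i$ of some existential unfolding. Such a pair must appear inside a common minimal satisfying assignment $X\in post(q'')$ of some ancestor $q''$, or be reachable from two such siblings. The labels $l[\cdot,j]$ record, for each state, the sets of $p$-literals reachable exactly $j$ steps ahead. Hence a later occurrence of $(p,i')$ below $q$ and $(\neg p,i')$ below $q'$ makes the labels of the sibling ancestors differ at the corresponding iteration. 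Those siblings and all their reachable states are then added to \texttt{confs}.

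Termination follows because the labels range over a finite domain, so the sequence of label vectors is eventually periodic. Any difference that would be found after the loop check fires must therefore already have been found. The result is that no two states of $Q\setminus M_\exists$ are existentially in conflict.

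Next, \Cref{th:existential:knowledge:compilation} applied with $M=M_\exists$ yields $\mathcal{A}_\exists$ in existential normal form. We additionally record that $\mathcal{L}(\mathcal{A}_\exists)=\mathcal{L}(\mathcal{A})$. This holds because the construction is a Miyano--Hayashi subset construction restricted to $M$: a macro-state $X$ conjoins the transition functions of its members, and for safety acceptance no breakpoint bookkeeping is needed. Hence runs of $\mathcal{A}$ and of $\mathcal{A}_\exists$ correspond level by level, by collapsing all copies of $M$-states at the same node-level into one macro-state and conversely expanding them.

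With this equivalence, \Cref{def:existential:knowledge:compilation} gives $\mathcal{L}(\sexists p.\mathcal{A}_\exists)=\mathcal{L}(\exists p.\mathcal{A}_\exists)$. Since $\exists p$ is defined purely semantically, $\mathcal{L}(\exists p.\mathcal{A}_\exists)=\mathcal{L}(\exists p.\mathcal{A})$ follows immediately from language equality. Equivalently, one can invoke \Cref{th:existential:normal:form} on $\mathcal{A}_\exists$ after noting that it has no existential conflicts.

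Item 2 is the dual argument. We use Alg.~\ref{alg:univ-fixpoint}, which compares labels across different $X,X'\in post(q)$, to over-approximate the universal conflicts of \Cref{def:universal:conflicts}. We then apply \Cref{th:universal:knowledge:compilation}. Language preservation of that construction is argued by collapsing disjunctively reached $M$-states into macro-states whose transition is the disjunction. Finally, \Cref{def:universal:knowledge:compilation} or \Cref{thm:univ:conflictfree} closes the argument together with $\mathcal{L}(\forall p.\mathcal{A}_\forall)=\mathcal{L}(\forall p.\mathcal{A})$.

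The main obstacle is the soundness of the fixpoint algorithms, i.e., showing that their output truly over-approximates the conflicts. The label comparison $l[q',i-1]\neq l[q'',i-1]$ is a coarse test, so the difficulty is showing it cannot miss a conflict occurring at an arbitrarily deep level $i'$. Doing so requires the periodicity argument, and it also requires that adding $\mathit{reach}(\cdot)$ propagates membership downward so that the conflicting pair itself, not only its ancestors, lands in $M$. The language preservation of the restricted subset constructions is the secondary delicate point, particularly in the universal case, where one must check that disjunctive merging does not enlarge the language.
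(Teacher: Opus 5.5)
\paragraph{Verdict.} Your decomposition matches what the paper implicitly does: it states the corollary right after the theorems and gives no separate argument. You are also right that $\mathcal{L}(\mathcal{A}_\exists)=\mathcal{L}(\mathcal{A})$ has to be added. The gap is at the step you flag yourself, and it cannot be closed by a periodicity argument. The implication ``a conflict below two conjunctive siblings makes their labels differ'' is false.

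\paragraph{Why the label test misses conflicts.} The conflict test of Alg.~\ref{alg:ex-fixpoint} checks label \emph{inequality}. Each $l[q,i]$ flattens all nondeterministic alternatives below the top-level choice into a single literal set. So two siblings can both carry $\{\{p,\neg p\}\}$, compare equal, and still be forced to opposite values of $p$ in one existential unfolding.

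Concretely, take $\Sigma=\{p,b\}$ with $\delta(q_0)=q_1\wedge q_2$, $\delta(q_1)=(p\wedge b)\vee(\neg p\wedge\neg b)$ and $\delta(q_2)=(p\wedge\neg b)\vee(\neg p\wedge b)$.
\begin{itemize}
\item $\mathcal{L}(\mathcal{A})=\emptyset$, because position~1 would need both $p\leftrightarrow b$ and $p\leftrightarrow\neg b$.
\item $q_1,q_2$ are in conflict per \Cref{def:existential:conflicts}.
\item Yet $l[q_1,i]=l[q_2,i]$ for every $i$: first $\{\{p,\neg p\}\}$, then $\{\emptyset\}$. The algorithm therefore returns $M_\exists=\emptyset$, and the compilation only adds a trivially accepting macro-state.
\item Since $\exists p.\delta(q_1)\equiv\exists p.\delta(q_2)\equiv b\vee\neg b$, $\mathcal{L}(\mathcal{A}'_\exists)$ contains every word.
\end{itemize}
So item~1 fails as stated, not merely your write-up of it.

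The dual instance $\delta(q_0)=q_1\vee q_2$, with the same $q_1,q_2$, breaks item~2 in the same way. Alg.~\ref{alg:univ-fixpoint} returns $\emptyset$ and $\mathcal{A}$ accepts every word. But $\forall p.\delta(q_1)\equiv b\wedge\neg b$ and $\forall p.\delta(q_2)\equiv\neg b\wedge b$, so $\mathcal{L}(\mathcal{A}'_\forall)=\emptyset$.

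\paragraph{How to repair the conflict step.} Replace the test by one about the literals themselves: flag $q',q''$ whenever $p\in\bigcup l[q',i-1]$ and $\neg p\in\bigcup l[q'',i-1]$, or vice versa. With that test your over-approximation argument goes through:
\begin{itemize}
\item By induction, $\bigcup l[s,k]$ contains every $p$-literal chosen by a state $k$ levels below $s$ in any unfolding.
\item Two distinct vertices at the same level have root paths that split at distinct siblings of one chosen $X$.
\item Eventual periodicity of the label vectors ensures every test is performed before termination.
\item Adding $\mathit{reach}$ places the conflicting pair itself into $M$.
\end{itemize}

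\paragraph{Universal compilation.} Separately, your claim $\mathcal{L}(\mathcal{A}_\forall)=\mathcal{L}(\mathcal{A})$ does not hold for the construction as printed in \Cref{th:universal:knowledge:compilation}. The conjunct for $X'=\emptyset$ demands $\delta(q)[M\mapsto 0]$. In the paper's own motivating case, $\delta(q_0)=m_1\vee m_2$, $\delta(m_1)=p$, $\delta(m_2)=\neg p$ and $M=\{m_1,m_2\}$, this yields $\delta'(q_0)=\false$.

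You need a genuinely dual construction, for example $\bigwedge_{X'\subseteq M}\bigl(X'\vee\delta(q)[X'\mapsto 0, M\setminus X'\mapsto 1]\bigr)$ with disjunctively interpreted macro-states and $\emptyset$ rejecting. You also need a proof that it preserves the language under safety acceptance, for instance via a K\"onig-style argument. Only then does item~2 follow.
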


\section{Experiments}
We implemented our knowledge compilation approach in a tool called \tool{}~\cite{metzger_2026_20009614}, written in F\# and built on top of several existing libraries and tools.
The implementation uses the C\# BDD package\footnote{\href{https://github.com/microsoft/DecisionDiagrams}{https://github.com/microsoft/DecisionDiagrams}} for efficient manipulation of Boolean formulas and the FsOmegaLib library\footnote{\href{https://github.com/ravenbeutner/FsOmegaLib}{https://github.com/ravenbeutner/FsOmegaLib}} for automaton operations.
For LTL-to-automaton translation and emptiness checking, we leverage \textsc{SPOT}~\cite{DBLP:conf/cav/Duret-LutzRCRAS22} as a backend.
All experiments are performed on an Apple M3 Max with 48 GB RAM. We set the per-run timeout to 30 seconds to make the experimental evaluation tractable.
.

\subsection{Implementation}
A key design decision in our implementation is a specialized automaton representation that directly reflects the formal definition of alternating automata.
In particular, we represent the transition function $\delta(q)$ of each state $q$ as a single BDD that encodes both the constraints on the input alphabet and the successor state assignments.
The BDD for $\delta(q)$ is constructed over a variable set that includes both the alphabet variables $\Sigma$ and Boolean variables representing states in $Q$.
A satisfying assignment of this BDD simultaneously determines which input letter is read and which successor states are activated.
Concretely, given an alphabet $\Sigma = \{a, b\}$ and states $Q = \{q_0, q_1, q_2\}$, the BDD for a transition function is built over the variables $\{a, b, q_0, q_1, q_2\}$.
From this BDD representation, we can efficiently extract the minimal satisfying assignments $\minsat(\delta(q))$ used in our fixpoint algorithms.
Each minimal satisfying assignment corresponds to a minimal conjunction of literals that satisfies the transition function.
This representation is particularly well-suited for state-wise quantification: Given a variable $p \in \Sigma$ to project, we can directly apply the BDD operations $\exists p$ or $\forall p$ to $\delta(q)$ for each state $q$ independently.
Our implementation provides functionality for both universal and existential knowledge compilation, allowing us to handle arbitrary QPTL formulas with quantifier alternations.

\subsection{Experimental Evaluation}
We use QPTL satisfiability as a case study to obtain a preliminary evaluation of the effectiveness of our knowledge compilation approach.
Let 
$ \varphi = \mathbb{Q}_0 p_0. \ldots \mathbb{Q}_n p_n.\varphi'$
be a QPTL formula where $\mathbb{Q} \in \{\exists, \forall\}$ and $\varphi'$ is the quantifier-free body of the formula.
We implemented QPTL satisfiability by applying the results of \Cref{cor:unified-framework} from $\mathbb{Q}_n p_n$ to $\mathbb{Q}_0 p_0$.
The starting point is the automaton for $\varphi'$, constructed by \textsc{SPOT}.
For each $\mathbb{Q} p$, we apply \Cref{cor:unified-framework}(1) if $\mathbb{Q} = \exists$ or \Cref{cor:unified-framework}(2) if $\mathbb{Q} = \forall$.
After all $\mathbb{Q}_i p_i$ are projected, we check the resulting automaton with \textsc{SPOT} for emptiness.
The formula $\varphi$ is satisfiable if \textsc{SPOT} returns that the automaton is nonempty.
\begin{figure}[t]
\includegraphics[width =.78\linewidth]{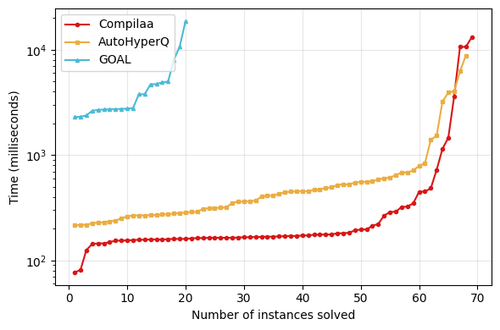}
\centering
\caption{A cactus plot comparing \tool{} with \autohyperq{} and \goal{} on SYNTCOMP benchmarks.}
\label{fig:cactus}
\end{figure}

\paragraph{Evaluation.}
We evaluated our approach on benchmarks derived from the latest edition of the synthesis competition (SYNTCOMP)~\cite{DBLP:journals/corr/abs-2206-00251}.\footnote{\hyperlink{https://github.com/SYNTCOMP/benchmarks}{https://github.com/SYNTCOMP/benchmarks}}
These benchmarks originate from synthesis problems, which we transform to \emph{universal synthesis} problems expressed as QPTL satisfiability problems.
A universal synthesis problem asks whether there are infinite evaluations of output traces for all valuations of environment traces, such that a given temporal specification is satisfied.
Note that this is not the realizability problem, where a finite state strategy must be found that produces correct output sequences on the finite past of input traces.
We produce QPTL formulas of the form $\forall \vec{i}. \exists \vec{o}. \varphi$, where $\vec{i}$ are the input variables, $\vec{i}$ are the output variables, and $\varphi$ is an LTL formula expressing the specification.
We converted all safety SYNTCOMP benchmarks to this QPTL format and compared our tool \tool{} to the state-of-the-art QPTL satisfiability solvers \autohyperq{}~\cite{DBLP:conf/lpar/BeutnerF23} and \goal{}~\cite{DBLP:conf/cav/TsaiTH13}.
In a second set of benchmarks, we evaluated our tool \tool{} on QPTL formulas with quantifier alternations per variable. We use \textsc{SPOT}'s \textsc{randltl} to generate 150 random LTL formulas and then convert them to QPTL formulas by assigning quantifiers for each variable in an alternating way. We check the resulting QPTL formulas for satisfiability and compare the execution times of \tool{} and \autohyperq{}. 
Note that both \textsc{randltl} and the safety specifications of SYNTCOMP limit the scalability of this evaluation.
In future work, we imagine that an evaluation on model-checking benchmarks, where \tool{} is used as a backend, will further highlight the effectiveness of the knowledge compilation algorithms.

\paragraph{Results.}
\begin{figure}[t]
\includegraphics[width =.78\linewidth]{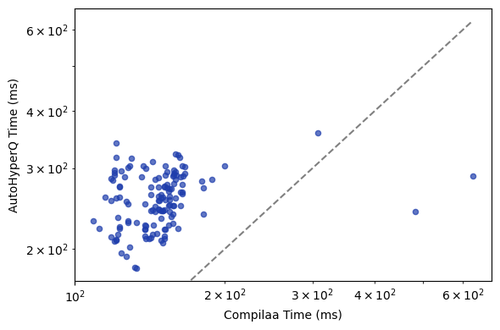}
\centering
\caption{A scatter plot comparing the runtimes of \tool{} and \autohyperq{} on 150 randomly generated QPTL formulas.}
\label{fig:scatter}
\end{figure}
\Cref{fig:cactus} presents a cactus plot comparing the performance of \tool{} against \autohyperq{} and \goal{} on the SYNTCOMP benchmarks.
The x-axis shows the number of solved instances, while the y-axis represents the time taken in seconds.
Each point on a curve represents one benchmark instance, sorted by solving time.
Our tool \tool{} performs competitively with state-of-the-art solvers.
While \goal{} runs into timeouts quickly, \tool{} and \autohyperq{} have comparable runtimes for most benchmarks.
However, as problem difficulty increases (moving right), \tool{} shows improved scalability: It solves almost all instances faster and maintains lower solving times on harder benchmarks.
The performance advantage of \tool{} is clearer with more quantifier alternation depth.
Even though the automata for the randomly generated formulas are small, capped by the execution times of \textsc{randltl}, we can observe a clear performance increase of \tool{} compared to \autohyperq{} in \Cref{fig:scatter}. 
This validates our key hypothesis: By avoiding expensive complementation operations through direct knowledge compilation on alternating automata, we can handle quantifications on variables in alternating automata more efficiently.
The BDD-based representation and our specialized fixpoint algorithms for conflict detection enable \tool{} to scale better than approaches that rely on full nondeterminization and repeated complementation steps.

\section{Conclusion}
We have presented a novel knowledge compilation framework for both existential and universal quantification in alternating safety automata. By compiling automata into existential and universal normal forms, our approach enables efficient state-wise quantification, thereby avoiding the computational bottleneck of repeated complementation used in standard approaches with nondeterministic automata.
We provided fixpoint algorithms for detecting quantification conflicts and presented constructions to resolve them locally. The experimental evaluation demonstrates that our tool \tool{} outperforms existing tools on QPTL satisfiability benchmarks.
Future work includes extending the algorithms to Büchi acceptance, where repair for universal quantifiers is hard, and using knowledge compilation as a backend for scalable HyperLTL model checkers.

\section*{Acknowledgements}
This work originated out of initial discussions in a workshop held as part of the Extended Reunion for Theoretical Foundations of Computer Systems at Simons Institute for the Theory of Computing, Berkeley during July-August 2024. This work was partially supported by the DFG in project 389792660 (TRR 248 – CPEC) and funded by the European Union through ERC Grant HYPER (No. 101055412).
Views and opinions expressed are however those of the authors only and do not necessarily reflect those of the European Union or the European Research Council Executive Agency.
Neither the European Union nor the granting authority can be held responsible for them.

\bibliographystyle{kr}
\bibliography{bibliography}

\clearpage
\appendix

\end{document}